\def\tsc#1{\csdef{#1}{\textsc{\lowercase{#1}}\xspace}}
\newdefinition{definition}{Definition}
\newcommand{\bigO}{\mathcal{O}}
\newtheorem{theorem}{Theorem}
\newtheorem{lemma}[theorem]{Lemma}
\newdefinition{remark}{Remark}
\newproof{proof}{Proof}
\newproof{pot}{Proof of Theorem \ref{thm}}
\newtheorem{observation}[theorem]{Observation}
\begin{document}
\let\WriteBookmarks\relax
\def\floatpagepagefraction{1}
\def\textpagefraction{.001}

%customia
\setlength{\textfloatsep}{5pt plus 2pt minus 2pt}
% Short title
\shorttitle{Beer Path Problems in Temporal Graphs}    

% Short author
\shortauthors{D'Ascenzo, Italiano, Kanellopoulos, Mpanti, Pagourtzis, Pergaminelis}  

% Main title of the paper
\title [mode = title]{Beer Path Problems in Temporal Graphs}  

% Title footnote mark
% eg: \tnotemark[1]
%\tnotemark[1] 

% Title footnote 1.
% eg: \tnotetext[1]{Title footnote text}
%\tnotetext[1]{} 

% First author
%
\author[1]{Andrea D'Ascenzo}[orcid=0000-0001-5612-0798]
\ead{andrea.dascenzo@gssi.it}

\author[2]{Giuseppe F. Italiano}[orcid=0000-0002-9492-9894]
\ead{gitaliano@luiss.it}

\author[3,4]{Sotiris Kanellopoulos}[orcid=0009-0006-2999-0580]
\cormark[1]
\ead{s.kanellopoulos@athenarc.gr}

\author[2]{Anna Mpanti}[orcid=0009-0006-8659-7830]
\ead{ampanti@luiss.it}

\author[3,4]{Aris Pagourtzis}[orcid=0000-0002-6220-3722]
\ead{pagour@cs.ntua.gr}

\author[3,4]{Christos Pergaminelis}[orcid=0009-0009-8981-3676]
\cormark[1]
\ead{chrispergaminelis@mail.ntua.gr}

\cortext[1]{Corresponding author}

\affiliation[1]{
  organization={Gran Sasso Science Institute},
  city={L'Aquila},
  country={Italy}
}

\affiliation[2]{
  organization={LUISS University},
  city={Rome},
  country={Italy}
}

\affiliation[3]{
  organization={National Technical University of Athens},
  city={Athens},
  country={Greece}
}

\affiliation[4]{
  organization={Archimedes, Athena Research Center},
  city={Athens},
  country={Greece}
}

% Options: Use if required
% eg: \author[1,3]{Author Name}[type=editor,
%       style=chinese,
%       auid=000,
%       bioid=1,
%       prefix=Sir,
%       orcid=0000-0000-0000-0000,
%       facebook=<facebook id>,
%       twitter=<twitter id>,
%       linkedin=<linkedin id>,
%       gplus=<gplus id>]

%\author[1]{}%[<options>]

% Corresponding author indication
%\cormark[1]

% Footnote of the first author
%\fnmark[1]

% Email id of the first author
%\ead{}

% URL of the first author
%\ead[url]{}

% Credit authorship
% eg: \credit{Conceptualization of this study, Methodology, Software}
%\credit{}

% Address/affiliation
%\affiliation[1]{organization={},
%            addressline={}, 
%            city={},
%          citysep={}, % Uncomment if no comma needed between city and postcode
%            postcode={}, 
%            state={},
%            country={}}

%\author[2]{}%[]

% Footnote of the second author
%\fnmark[2]

% Email id of the second author
%\ead{}

% URL of the second author
%\ead[url]{}

% Credit authorship
%\credit{}

% Address/affiliation
%\affiliation[2]{organization={},
%            addressline={}, 
%            city={},
%          citysep={}, % Uncomment if no comma needed between city and postcode
%            postcode={}, 
%            state={},
%            country={}}

% Corresponding author text
%\cortext[1]{Corresponding author}

% Footnote text
%\fntext[1]{}

% For a title note without a number/mark
\nonumnote{}

% Here goes the abstract
\begin{abstract}
Computing paths in graph structures is a fundamental operation in a wide range of applications, from transportation networks to data analysis. The \emph{beer path} problem, which captures the option of visiting points of interest, such as gas stations or convenience stores, prior to reaching the final destination, has been recently introduced and extensively studied in static graphs. However, existing approaches do not account for temporal information (for example,  transit service schedules), which is often crucial in real-world scenarios. 
In this work, we introduce the notion of beer paths in temporal graphs, where edges are time-dependent and \emph{beer vertices} are active only at specific times. We formally define the problems of computing earliest-arrival, latest-departure, fastest, and shortest temporal beer paths and propose efficient algorithms for all of them under both edge stream and adjacency list representations. We show that the time complexity of each of our algorithms is aligned with that of the corresponding standard temporal pathfinding algorithm, thus preserving efficiency. 
We also present preprocessing techniques that enable efficient query answering under dynamic conditions, for example new openings or closings of shops. We achieve this either through appropriate precomputation of selected paths or by transforming a temporal graph into an equivalent static graph.
\end{abstract}

% Use if graphical abstract is present
%\begin{graphicalabstract}
%\includegraphics{}
%\end{graphicalabstract}

% i dont know if this is needed

 %Research highlights
%\begin{highlights}
%\item 
%\item 
%\item 
%\end{highlights}

% Keywords
% Each keyword is seperated by \sep
\begin{keywords}
 Temporal Graphs \sep Beer Path \sep Shortest Paths \sep Dynamic Graphs \sep Graph Algorithms
\end{keywords}

\maketitle

% Main text
\section{Introduction}%\label{}
Answering path queries is a fundamental operation on networks, with numerous optimization and information retrieval tasks relying on an efficient response to such queries. 
Bacic, Mehrabi and Smid~\cite{Bacic_beer_path} recently introduced the concepts of \emph{beer vertices} and \emph{beer paths} to model the option of taking a detour towards a set of \emph{points of interest} when traveling from a source point to a destination point.
In fact, choosing detours is an essential part of daily life and modern computing systems. When traveling from point~A to point~B, we often need to adjust our route, perhaps to refuel at a gas station or to pick up a beer.
This can be modeled as follows: Given a weighted graph $G=(V,E)$ and a subset $B \subseteq V$ of beer vertices, a beer path
between two vertices $u$ and $v$ is a path that starts from $u$, ends at $v$, and visits at least one beer vertex
in $B$. The \emph{beer distance} between two vertices is the shortest length of any beer path connecting them.

Since their introduction, beer paths have been extensively studied in different graph classes and frameworks (e.g.,~\cite{Bacic_beer_path,Das_beer_path_interval,Hanaka_beer_path_decomposition,Gudmundsson_beer_path_treewidth}).
All existing techniques, however, are applicable to \textit{static} graphs. Because of the inherent nature of beer path applications, temporal information is, in fact, a fundamental feature to consider. For example, bus routes may operate only at specific times, while beer shops may be subject to opening and closing hours. Temporal graphs extend the concept of static graphs to include such information, with each edge being associated with a starting time and a traversal time, while two vertices are adjacent only in the time instants in which an edge connecting them is active. 
In this work we introduce and study beer path problems in temporal graphs.

\subsection{Related work}
Temporal graphs have been the focus of research in several fields due to their ability to model time-dependent information in a wide range of real-world scenarios. The concepts of earliest-arrival, latest-departure, fastest, and shortest temporal paths were introduced by Wu et al. in~\cite{Path_Wu} and further studied in~\cite{Wu_efficient_paths}. 
Beyond these standard temporal path objectives, Bentert et al.~\cite{Himmel_optimal_walk} study optimal temporal walks under waiting-time constraints, while Erlebach, Hoffmann, and Kammer~\cite{Erlebach_exploration} consider temporal walks in the context of graph exploration (cf.~\cite{DBLP:conf/icalp/DogeasEKMM24}). Temporal walks and paths have also been used in centrality and betweenness analysis~\cite{oettershagen2022temporal,Naima2025temporal,zhang2024efficient}.
%The temporal graph literature provides a natural framework for modeling networks whose connections are available only at certain times or have time-dependent traversal costs. 
Moreover, temporal graphs have recently been studied from perspectives such as pathfinding and network dynamics~\cite{kostakos_temporal_graphs,Michail_temporal_graphs,Spirakis_dynamic}, temporal reachability~\cite{DBLP:conf/isaac/Doring25, Deligkas_reachability_ijcai, Deligkas_temporal_parameterized, Deligkas_shifting} and temporal edge and path covers~\cite{Deligkas_temporal_edge_cover, temporal_path_cover_Dailly_MFCS, cioni2026temporalpathcoversdilworth}. Other recent research includes temporal flows and cuts~\cite{Akrida_temporal_flow}, temporal network optimization under connectivity constraints~\cite{Mertzios_temp_conn} and surveys of temporal network models, algorithms, tools, and applications in areas such as biology and medicine~\cite{hosseinzadeh2022temporal}. Comprehensive overviews on temporal graphs are provided in~\cite{Holme_Temporal_graphs,Wang_Temporal_graphs}.

%Beyond regular pathfinding, temporal walks have been studied by~\cite{Himmel_optimal_walk,Erlebach_exploration,Naima2025temporal, oettershagen2022temporal,zhang2024efficient}.

%\textcolor{orange}{Beyond pathfinding, temporal graphs have also been studied under alternative perspectives.
%Naima~\cite{Naima2025temporal} extended betweenness centrality to temporal graphs by defining it with respect to various optimality criteria for temporal paths, including earliest-arrival, fastest, and foremost paths, while Oettershagen~\cite{oettershagen2022temporal} et al. introduced Temporal Walk Centrality, a temporal extension of walk-based node ranking.} \textcolor{red}{maybe shorten}

%Temporal graphs have been the focus of research in several fields due to their ability to model a wide range of real-world scenarios. Comprehensive overviews are provided in~\cite{Holme_Temporal_graphs,Wang_Temporal_graphs}. The concepts of earliest arrival, latest departure, fastest, and shortest paths in temporal graphs were introduced in~\cite{Path_Wu}, where it is shown that these problems can be solved in polynomial time with respect of the graph size.
%
The problem of computing \textit{shortest beer paths} in static graphs was recently introduced by Bacic et al.~\cite{Bacic_beer_path}.
Since the problem is generally tractable, researchers have focused on specific graph classes to design data structures that allow answering beer path queries faster than the baseline method. Bacic et al.~\cite{Bacic_beer_path} developed an index for \textit{outerplanar graphs}; Das et al.~\cite{Das_beer_path_interval} studied the beer path problem on \textit{interval graphs}; Hanaka et al.~\cite{Hanaka_beer_path_decomposition} achieved optimal query time on \textit{series-parallel graphs} and linear preprocessing time on graphs having \textit{bounded-size triconnected components};  Gudmundsson and Sha~\cite{Gudmundsson_beer_path_treewidth} focused on bounded treewidth graphs; 
Coudert, D'Ascenzo and D'Emidio~\cite{Coudert_beer_path_indexing} designed an index data structure for answering beer path queries efficiently.
% In their paper, the authors design a preprocessing-based method that, for \textit{outerplanar graphs}, computes in $\bigO(n)$ time an index of size $\bigO(n)$ that allows to find, for a pair of vertices, the beer distance $d$ in $\bigO(\alpha(n))$ time, where $\alpha(n)$ is an inverse of the Ackermann function, and a corresponding shortest beer path in $\bigO(d)$ time.
% Subsequently, many results on beer paths have been carried out on specific graph classes.
% %
% Das et al.'s work~\cite{Das_beer_path_interval} focuses on \textit{interval graphs}, and provides a data structure occupying $2n \log n + \bigO(n) + O(|B| \log n)$ bits of space, allows to compute the beer distance $d$, for a set of beer vertices $B$, in $\bigO(\log^\epsilon(n))$ time, for any constant $\epsilon > 0$ and a corresponding shortest beer path in $\bigO(\log ^\epsilon (n) + d)$ time.
% %
% The method of Hanaka et al.~\cite{Hanaka_beer_path_decomposition} achieves optimal query time on \textit{series-parallel graphs} and linear preprocessing time on graphs having \textit{bounded-size triconnected components} by computing a suited graph decomposition.
% %
% the work of Gudmundsson and Sha~\cite{Gudmundsson_beer_path_treewidth} which showed how a graph with bounded treewidth $t$ can be preprocessed in $\bigO(t^3n)$ time to guarantee the retrieval of the beer distance $d$ in $\bigO(t^3 \alpha(n))$ time and of the corresponding shortest beer path in $\bigO(d t^3 \alpha(n) )$ time.
Additionally, Bilò et al.~\cite{bilo_spanner_beer} address the problem of
constructing graph spanners w.r.t. the \textit{group Steiner metric}, which generalizes the beer distance metric. Hon, Huang and Li~\cite{Hon_beer_flow} recently introduced the maximum beer flow problem, extending beer paths to a flow setting in which every unit of flow must visit a beer store.
Note that the shortest beer path problem is a special case of the \textit{Generalized Shortest Paths} problem~\cite{Rice_Generalized_Shortest} and the NP-hard \textit{Generalized Travelling Salesperson Problem} %, both of which asking, for a given pair of vertices, to determine a path of minimum total cost that visits at least one location from each of a set of specified location categories (e.g., gas stations, grocery stores)
(cf.~\cite{laporte_GTSP,Pop_GTSP,Rice_GTSP,srivastava_GTSP}). Traveling salesman variants have been studied in temporal graphs by Michail and Spirakis~\cite{Michael_temp_tsp}. To the best of our knowledge, this is the first study of beer path problems in temporal graphs.

\subsection{Our contribution}
In this paper, we introduce the concept of beer paths in (directed) temporal graphs and several related optimization problems, namely \emph{earliest-arrival}, \emph{latest-departure}, \emph{fastest}, and \emph{shortest temporal beer paths} (EABP, LDBP, FBP, SBP). %\AnnaComment{maybe ETBP, LTBP, FTBP, STBP with T from temporal} \SotirisComment{We usually skip the word temporal throughout the paper}
In order to further incorporate temporal information that accurately reflects real-life applications, our framework also considers timestamped beer vertices, i.e., the case where each beer vertex is augmented with the time instants in which it is active as a point of interest. 
This feature naturally models many scenarios, e.g., opening and closing times of bars or gas stations, active motorway toll booths, instants in which airports may serve for emergency landing, and many more. 
Note that our algorithms can also work for active time intervals instead of instants with minor modifications (by using binary searches inside time intervals instead of lists of time instants). Most of our algorithms receive a sorted \emph{edge stream} as input, as is standard for temporal graphs (cf.~\cite{Path_Wu,Holme_Temporal_graphs,Lee_edgestream,OettershagenKM23_edgestream}). However, we also consider algorithms with random access to the edges of the graph, for example through an adjacency list with temporal logs (cf. Caro, Rodríguez and Brisaboa~\cite{Caro_temporal_logs}), when they can offer a better time complexity.

\begin{center}
\begin{minipage}{\textwidth}
\centering
    \begin{tabular}{llll}
\toprule
    Objective & Source/Target & Input Structure & Time Complexity \\ \hline
    \multirow{2}{*}{EABP} & One-to-all & Edge Stream & $\bigO(n+M)$ \\ 
     & One-to-all & Adjacency List & $\bigO(m\log \pi+n\log n)$ \\ \hline

    \multirow{2}{*}{LDBP} & All-to-one & Edge Stream (reverse) & $\bigO(n+M)$ \\
     & All-to-one & Adjacency List & $\bigO(m\log \pi+n\log n)$ \\ \hline
 
    FBP & One-to-all & Edge Stream & $\bigO(n+(M+kc)\log d_{in})$ \\ \hline

     \multirow{2}{*}{SBP} & \multirow{2}{*}{One-to-one} & Edge Stream & \multirow{2}{*}{$\bigO(n+M \log d_{max} + k d_{in} \log d_{out})$} \\
     & & (regular and reverse) & \\ \hline
 
    \end{tabular}
    \captionof{table}{Overview of our algorithms in Section~\ref{sec:algo}. Some complexity terms that are negligible under reasonable assumptions are omitted from this table for simplicity. See Table~\ref{table:notation} for a comprehensive list of all variable symbols used throughout the paper.}
    \label{tab:results_1}
\end{minipage}
\end{center}

In Section~\ref{sec:algo}, we propose efficient algorithms for the aforementioned objectives, with each one differing both in methods and in complexity (see Table~\ref{tab:results_1} for a synopsis). As intermediate steps for computing EABPs and LDBPs, we define \emph{multiple-source} earliest-arrival paths and \emph{multiple-target} latest-departure paths and propose algorithms for computing them, which can be of independent interest. 
For FBP and SBP we rely on the computation of certain non-dominated paths via various domination criteria. For SBP in particular, we define a new path domination criterion and propose an algorithm for computing respective non-dominated paths. The time complexity of our algorithms generally aligns with that of respective standard temporal pathfinding algorithms~\cite{Path_Wu}. All our algorithms except the one for FBP preserve the original graph structure, which is beneficial because, depending on the data structure used, adding new edges to a temporal graph may be a costly operation.

%\footnote{MSEAPs and MTLDPs have interesting applications in their own right; for example, an MSEAP is the optimal path for sending an ambulance to some target destination, assuming an ambulance can start from multiple sources (hospitals), at a different time for each source.}

In Section~\ref{sec:preprocessing}, we propose preprocessing methods for answering beer path queries faster. For EABP and LDBP, this is achieved through the precomputation of selected non-dominated paths in the graph. For FBP and SBP, we utilize the transformation proposed by Wu et al.~\cite{Path_Wu} to convert a temporal graph to a static graph. 
We remark that the query time achieved for SBP is faster than the respective algorithm of Section~\ref{sec:algo} only for graphs with strictly positive traversal times. 
%We remark that no speedup is achieved for SBP compared to the result of Section~\ref{sec:algo}, but we obtain a one-to-all SBP algorithm that has comparable speed to the one-to-one SBP algorithm of Section~\ref{sec:algo}.
See Table~\ref{tab:results_2} for a synopsis of these results.

%\textcolor{red}{footnote in wrong page + might need to break some algorithms}

\begin{center}
\begin{minipage}{\textwidth}
\centering
    \begin{tabular}{llllll}
    \hline
    Objective & Method & Structure Time & Space & Source/Target  & Query Time 
    \\ \hline
    EABP & Path precomputation & $\bigO(n^2 + nM \log c)$ & $\bigO (nkc)$ & One-to-one  & $\bigO (k \log c)$ \\ \hline
     
    LDBP & Path precomputation & $\bigO(n^2 + nM \log c)$ & $\bigO (nkc)$ & One-to-one  & $\bigO (k \log c)$ \\ \hline
     
    FBP & Conversion to static & $\bigO(M\log d_{max})$ & $\bigO(M)$ & One-to-all  & $\bigO(M)$ \\ \hline
    
    \multirow{2}{*}{SBP} & \multirow{2}{*}{Conversion to static} & \multirow{2}{*}{$\bigO(M\log d_{max})$} & \multirow{2}{*}{$\bigO(M)$} & \multirow{2}{*}{One-to-all}  & $\bigO(M\log M)$ \\ 

    & & & & & $\bigO(M)$ \footnotemark \\ \hline
    
    \end{tabular}
    \captionof{table}{Overview of our preprocessing methods in Section~\ref{sec:preprocessing}.}
    \label{tab:results_2}
\end{minipage}
\end{center}

\footnotetext{\(\mathcal{O}(M)\) time can be achieved if all temporal edges have strictly positive traversal times (\(\lambda>0\)), in which case the transformed graph is a DAG. See Section~\ref{sec:SBP_transformed} for more details.} 

\section{Preliminaries and problem definition}\label{sec:prelims}

\subsection{Preliminaries}

Let $ G = (V, E)$  be a directed temporal graph, with $|V|=n$ and $|E|=M$. Each temporal edge $ e \in E $ is defined as a quadruple $ (u, v, t, \lambda )$, with \( u, v \in V \), where  $t$  denotes the \textit{starting time}, $ \lambda$  denotes the \textit{traversal time} needed to travel from $ u$ to $v$ starting at time $t$, and $t + \lambda $ represents the \textit{ending time}. Most of our algorithms receive as input an \emph{edge stream}, containing all edges sorted by ascending $t$-value. This is standard for temporal graphs (cf.~\cite{Holme_Temporal_graphs,Lee_edgestream,OettershagenKM23_edgestream,Path_Wu}).

The set of temporal edges from $u$ to $v$ $(u,v \in V)$ is denoted by $E(u,v)$. Let $\pi = \max \limits _{u,v \in V}\{|E(u,v)|\}$ (maximum number of parallel edges) and $m = |\{u,v \in V : |E(u,v)| > 0\}|$ (number of connected pairs of vertices).
We denote with $N_{out}(u) = \{v : (u,v,t,\lambda) \in E \}$ the set of out-neighbours of $u \in V$, and with $d_{out}$ the maximum out-degree in $G$. We similarly define $N_{in}(u)$ and $d_{in}$.
Finally, $d_{max}$ denotes the maximum between $d_{out}$ and $d_{in}$.

A \emph{temporal path} $P$ is defined as a sequence of edges $e_i=(v_i, v_{i+1},t_i, \lambda_i)\in E$, with $(t_i+\lambda_i) \leq t_{i+1}$, for $1 \leq i \leq \ell$, where $\ell$ is the number of edges in the path. Throughout the paper we may informally denote a path by the sequence of vertices it traverses (instead of its edges), with the notation $P=\langle v_1,v_2,\dots, v_{\ell+1}\rangle$.
%vertices $P=\langle v_1,v_2,\dots, v_\ell\rangle$ where $(v_i, v_{i+1},t_i, \lambda_i) \in E$ is the $i$-th temporal edge on $P$ for $1 \leq i \leq \ell$, and $(t_i+\lambda_i) \leq t_{i+1}$ for $1 \leq i < \ell$.
The quantity $(t_\ell + \lambda_\ell)$ is known as the \emph{ending time} of $P$, denoted by $\text{end}(P)$, while $t_1$ is the \emph{starting time} of $P$, $\text{start}(P)$. 
The \emph{duration} of $P$ is defined as $\text{dura}(P)=\text{end}(P)-\text{start}(P)$. 
Finally, the sum over the traversal times of the edges of $P$ is known as the \emph{distance} of the path, denoted as $\text{dist}(P)$. For a comprehensive list of all notation used throughout the paper, see Table~\ref{table:notation}.

\begin{table}[H]
    \centering
    \begin{tabularx}{\textwidth}{|c|X|}
    \hline
    \textbf{Notation} & \textbf{Description} \\
    \hline
    
    $G = (V, E)$ & A (directed) temporal graph with vertex set $V$ and temporal edge set $E$. \\
    \hline

    $B$ & The beer vertex set in beer path problems. \\
    \hline
    
    $e = (u, v, t, \lambda)$ & A temporal edge from vertex $u$ to vertex $v$, with starting time $t$ and traversal time $\lambda$. \\
    \hline

    $n$ & Number of vertices in a temporal graph. \\
    \hline

    $k$ & Number of beer vertices in a temporal graph. \\
    \hline

    $M$ & Number of temporal edges in a temporal graph. \\
    \hline

    $m$ & Number of vertex pairs $u,v$ such that at least one temporal edge exists from $u$ to $v$. \\
    \hline

    $\pi$ & Maximum number of parallel temporal edges (over all vertex pairs). \\
    \hline

    $T_b$ & A list of timestamps in which the beer vertex $b\in B$ is active. \\
    \hline

    $T$ & The maximum size of lists $T_b$ over all $b\in B$. \\
    \hline

    $N_{out}(v)$ & The set of out-neighbours of vertex $v$. \\
    \hline

    $N_{in}(v)$ & The set of in-neighbours of vertex $v$. \\
    \hline

    $d_{out}$ & The maximum out-degree over all vertices. \\
    \hline

    $d_{in}$ & The maximum in-degree over all vertices. \\
    \hline

    $d_{max}$ & $\max \{d_{out},d_{in}\}$. \\
    \hline

    $c$ & $\min \{d_{out},d_{in}\}$. \\
    \hline

    $\text{start}(P)$ & Starting time of temporal path $P$ from its first vertex. \\
    \hline
    
    $\text{end}(P)$ & Arrival time of temporal path $P$ to its last vertex. \\
    \hline
    
    $\text{dura}(P)$ & Duration of temporal path $P$, i.e., $\text{end}(P)-\text{start}(P)$.\\
    \hline
    
    $\text{dist}(P)$ & Distance of temporal path $P$, i.e., the sum over the $\lambda$-values of the edges in $P$. \\
    \hline

    \multirow{2}{*}{$[t_\alpha, t_\omega]$} & The time window desired for solutions in some temporal path problem. For path $P$ to be considered, it must be $\text{start}(P) \geq t_\alpha$ and $\text{end}(P) \leq t_\omega$. \\
    \hline

   $P_B(x,y,[t_\alpha,t_\omega])$ & The set of temporal beer paths from vertex $x$ to vertex $y$ within time interval $[t_\alpha,t_\omega]$. \\
    \hline

    EABP & Earliest-arrival (temporal) beer path ($\text{end}(P)$ is minimized). \\
    \hline

    LDBP & Latest-departure (temporal) beer path ($\text{start}(P)$ is maximized). \\
    \hline

    FBP & Fastest (temporal) beer path ($\text{dura}(P)$ is minimized). \\
    \hline

    SBP & Shortest (temporal) beer path ($\text{dist}(P)$ is minimized). \\
    \hline

    MSEAP & Multiple-source earliest-arrival (temporal) path (Def.~\ref{def:multisource}). \\
    \hline

    MTLDP & Multiple-target latest-departure (temporal) path (Def.~\ref{def:multitarget}). \\
    \hline

    \end{tabularx}
    \caption{Frequently used notation.}
    \label{table:notation}
\end{table}

Throughout the paper, we may assume the absence of \textit{dominated} edges for some of our algorithms.

%\Needspace{8\baselineskip}

\begin{definition}[\textbf{Dominated edge}]
\label{def:dom_edge}
    We say that a temporal edge $e = (u, v, t, \lambda)$ is \emph{dominated} if there exists another edge $e' = (u, v, t', \lambda')$ such that $t' \ge t$ and $t' + \lambda' \le t + \lambda$, with at least one of the two inequalities being strict. %Symmetrically, we say that edge $e'$ \emph{dominates} edge $e$.
\end{definition}

%\textit{Example: }

Intuitively, for all optimality criteria considered in this paper there would be no benefit in using a dominated temporal edge $e$ as opposed to a temporal edge $e'$ that dominates $e$. Note that this does not extend to all temporal path objectives in the literature; for example, an optimal path under the minimum waiting time objective~\cite{Bentert_waiting_time} may use dominated edges.
The edge domination criterion naturally generalizes to paths.

\begin{definition}[\textbf{Dominated paths}]
\label{def:dom_path}
Let $P$ be a temporal $u$--$v$ path. We say that $P$ is \emph{dominated} if there exists another temporal $u$--$v$ path $P'$ such that $\text{start}(P') \ge \text{start}(P)$ and $\text{end}(P') \le \text{end}(P)$, with at least one of the inequalities being strict. We say that $P$ is \emph{distance-wise dominated} if there exists a temporal $u$--$v$ path $P'$ such that $\text{dist}(P') \le \text{dist}(P)$ and $\text{end}(P') \le \text{end}(P)$, with at least one of the inequalities being strict.
\end{definition}

In some of our algorithms we will use subroutines that compute all non-dominated paths (excluding ties) from one node to all others within a given time interval. This is useful for finding fastest or shortest paths (depending on which domination criterion is used), as well as for precomputing certain temporal paths to answer queries faster. These subroutines (Algorithms~\ref{alg:non_dominated_paths} and~\ref{alg:distance_non_dominated_paths}) stem from the fastest and shortest path algorithms of Wu et al.~\cite{Path_Wu}. We refer the reader to Section~\ref{auxiliary_section_dom_path_algo} for a detailed description of these subroutines and their running times.

%This subroutine consists of part of the fastest-path algorithm of \cite{Path_Wu}. We defer the full algorithm to Appendix~\ref{}.

%We will also use a similar subroutine for distance-wise non-dominated paths, which is part of the shortest-path algorithm of~\cite{Path_Wu}. Its full description is deferred to Appendix~\ref{}.
\subsection{The temporal beer path problem}

We assume a (directed) temporal graph with a set $B = \{b_1, \dots, b_k\} \subseteq V$ of $k$ \emph{beer vertices}. 
For each beer vertex $b \in B$, we are given a set $T_{b}$ of timestamps denoting the time instants in which $b$ is \emph{active}.
We assume that beer vertices can still be traversed in time units in which they are inactive, functioning as regular vertices.
We define $T=\max\limits_{b \in B} \{|T_b|\}$.

Given a temporal graph $G$, vertices $x,y \in  V$ and a time interval $[t_\alpha, t_\omega ]$, we denote the set of temporal beer paths between two vertices $x$ and $y$ starting at or after $t_\alpha$ and ending no later than~$t_\omega$ as $P_B(x,y,[t_\alpha, t_\omega]) =$ $\{P=\langle x=v_1,v_2,\dots,v_\ell=y\rangle:P$  is a temporal path from $x$ to $y$ such that  $start(P) \geq t_\alpha, \ end(P) \leq t_\omega, \ \exists \ (i \in [\ell], t \in T_{v_i}) : (v_i \in B, \  t\in [t_{i-1}+\lambda_{i-1},t_i]  ) \}$. Note that a beer path must traverse at least one beer vertex at a time in which it is active (e.g., the $x$--$b_5$--$y$ path in Figure~\ref{fig:temporal_graph} is not a valid temporal beer path).

\begin{definition}[Objectives]\label{def:beer_paths} 
A temporal $x$--$y$ path $P \in P_B(x,y,[t_\alpha,t_\omega])$ is:
\begin{itemize}
    \item An \emph{earliest-arrival beer path} (\textbf{EABP}) if 
    $\text{end}(P) = \min \{ \text{end}(P') : P' \in P_B(x,y,[t_\alpha,t_\omega]) \}$.
    
    \item A \emph{latest-departure beer path} (\textbf{LDBP}) if 
    $\text{start}(P) = \max \{ \text{start}(P') : P' \in P_B(x,y,[t_\alpha,t_\omega]) \}$.
    
    \item A \emph{fastest beer path} (\textbf{FBP}) if 
    $\text{dura}(P) = \min \{ \text{dura}(P') : P' \in P_B(x,y,[t_\alpha,t_\omega]) \}$.
    
    \item A \emph{shortest beer path} (\textbf{SBP}) if 
    $\text{dist}(P) = \min \{ \text{dist}(P') : P' \in P_B(x,y,[t_\alpha,t_\omega]) \}$.
\end{itemize}
\end{definition}

\begin{figure}[pos=t]
\centering
\includegraphics[width=0.55\textwidth]{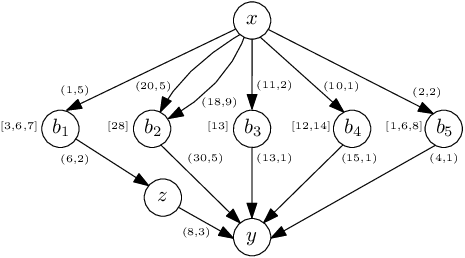}
\caption{Example in which the $x$--$y$ EABP/LDBP/FBP/SBP use $b_1, \ldots, b_4$ respectively. Numbers in parentheses denote $(t,\lambda)$ of edges and numbers in brackets denote beer active times. The $x$--$y$ path that uses $b_5$ is not a valid temporal beer path. The edge with $(t,\lambda)=(18,9)$ is dominated and is irrelevant for all four objectives.
}
\label{fig:temporal_graph}
\end{figure}

%We will focus on a subset of these paths based on the following four optimality criteria.

%\begin{definition}[\textbf{EABP}]\label{def:eabp}
%    A temporal $x$-$y$ path $P \in {P_B}(x,y,[t_\alpha, t_\omega])$ is an earliest-arrival beer path if $end(P) = \min \{ end(P'): P' \in P_B(x,y,[t_\alpha, t_\omega]) \}$.
%\end{definition}

%\begin{definition}[\textbf{LDBP}]\label{def:ldbp}
%    A temporal $x$-$y$ path $P \in {P_B}(x,y,[t_\alpha, t_\omega])$ is a latest-departure beer path if $start(P) = \max \{ start(P'): P' \in P_B(x,y,[t_\alpha, t_\omega]) \}$.
%\end{definition}

%\begin{definition}[\textbf{FBP}]\label{def:fbp}
%    A temporal $x$-$y$ path $P \in {P_B}(x,y,[t_\alpha, t_\omega])$ is a fastest beer path if $dura(P) = \min \{ dura(P'): P' \in P_B(x,y,[t_\alpha, t_\omega]) \}$.
%\end{definition}

%\begin{definition}[\textbf{SBP}]\label{def:sbp}
%    A temporal $x$-$y$ path $P \in {P_B}(x,y,[t_\alpha, t_\omega])$ is a shortest beer path if $dist(P) = \min \{ dist(P'): P' \in P_B(x,y,[t_\alpha, t_\omega]) \}$.
%\end{definition}

For examples for each objective, see Figure~\ref{fig:temporal_graph}. In all our algorithms, we focus on the computation of the objective function (e.g., the earliest arrival time) instead of reporting the corresponding optimal path. This  path can be easily recovered by storing parent information. %The proofs of statements marked with (*) are deferred to Appendix~\ref{appendix_omit_proof}.

\section{Synopsis of previous results}

\subsection{Non-dominated path algorithms from Wu et al.~\cite{Path_Wu}}
%\phantomsection
\label{auxiliary_section_dom_path_algo}

In this section we give the full descriptions of Algorithms~\ref{alg:non_dominated_paths} and~\ref{alg:distance_non_dominated_paths} for computing non-dominated or distance-wise non-dominated paths from a given source respectively. These algorithms consist of parts of the fastest and shortest path algorithms of Wu et al.~\cite{Path_Wu}. We refer the reader to that paper for details about their correctness. In this work, we use these algorithms as subroutines for computing FBPs and SBPs, as well as for preprocessing the graph to answer certain queries faster.

Note that in the case of a tie between two $x-v$ temporal paths $P,\ P'$ (e.g. $start(P)=start(P')$ and $end(P)=end(P')$ in Alg.~\ref{alg:non_dominated_paths}), the algorithm only stores one of them, since they can be used interchangeably for any objective. Throughout the paper, when we say "all non-dominated paths", we imply that only one path is kept when there are such ties.

Algorithm~\ref{alg:non_dominated_paths} computes all non-dominated paths from a given source within a given time interval. For each candidate path $P$, the algorithm stores a tuple $(s,a)$, where $s=start(P)$ and $a=end(P)$. Algorithm~\ref{alg:non_dominated_paths} runs in time $\bigO(n+M \log c)$, where $c= \min \{d_{in},d_{out}\}$. Note that $c$ is a bound for the amount of non-dominated pairs $(s,a)$ (start, arrival) between any two vertices and, thus, also a bound for the size of each list~$L_v$. For more details, we refer the reader to~\cite{Path_Wu}.

\begin{algorithm}[ht]
\caption{$\texttt{Non-dom\textunderscore paths($G,x,[t_\alpha,t_\omega]$)}$}
\label{alg:non_dominated_paths}
\begin{algorithmic}[1]
\Require{A temporal graph $G = (V, E)$ in edge stream representation, source vertex $x$, time interval $[t_\alpha, t_\omega]$.}
\Ensure{For each $v \in V$, the list $L_v$ of non-dominated $(s, a)$ pairs (from $x$ to $v$) within \([t_{\alpha},t_{\omega}]\).}
\For{$v \in V$}
   \State Create a sorted list $L_v$, where an element of $L_v$ is a pair $(s, a)$ in which $s$ is the starting time of a path $P$ from $x$ to $v$ and is used as the key for ordering in $L_v$, and $a$ is the time in which the path $P$ arrives at $v$; initially, $L_v$ is empty
\EndFor

\For{$e = (u, v, t, \lambda)$ in the edge stream}
    \If{$t \ge t_\alpha$ and $t + \lambda \le t_\omega$}
        \If{$u = x$ and $(t, t) \not\in L_x$}
            \State Insert $(t, t)$ into $L_x$\;
        \EndIf
        
    \State Let $(s', a')$ be the element in $L_u$ for which $a' = \max\{a \mid (s, a) \in L_u,\ a \le t\}$
    
    %Set $s[v] \gets s'[u]$, $a[v] \gets t + \lambda$\;
      \If{$\exists (s,a) \in L_v$ such that $s=s'$ and $a > t + \lambda$}
            \State Replace $(s,a)$ with $(s',t + \lambda)$\;
        \Else
            \State Insert $(s', t + \lambda)$ into $L_v$\;
       \EndIf
    
    \State Remove dominated elements in $L_v$\;
    \EndIf
\EndFor

\State \Return $L_v$ for all $v \in V$
\end{algorithmic}
\end{algorithm}

We next present Algorithm~\ref{alg:distance_non_dominated_paths}, which is analogous to Alg.~\ref{alg:non_dominated_paths} for \emph{distance-wise} non-dominated paths. For a candidate path $P$, this algorithm stores a tuple $(d,a)$, where $d=dist(P)$ and $a=end(P)$. Algorithm~\ref{alg:distance_non_dominated_paths} runs in time $\bigO(n+M \log d_{in})$. Observe that $d_{in}$ is a bound for the amount of distance-wise non-dominated pairs $(d,a)$ (distance, arrival) between any two vertices and, thus, also a bound for the size of each list $L_v$ in Alg.~\ref{alg:distance_non_dominated_paths}. For more details, we refer the reader to~\cite{Path_Wu}.

\begin{algorithm}[ht]
\caption{$\texttt{Dist\textunderscore non-dom\textunderscore paths($G,x,[t_\alpha,t_\omega]$)}$}
\label{alg:distance_non_dominated_paths}
\begin{algorithmic}[1]
\Require{A temporal graph $G = (V, E)$ in edge stream representation, source vertex $x$, time interval $[t_\alpha, t_\omega]$.}
\Ensure{For each $v \in V$, the list $S_v$ of distance-wise non-dominated $(d, a)$ pairs (from $x$ to $v$) within \([t_{\alpha},t_{\omega}]\).}
\For{$v \in V$}
    \State Create a sorted list $L_v$, where an element of $L_v$ is a pair $(d, a)$ in which $d$ is the distance of a path $P$ from $x$ to $v$ and $a$ is the time in which path $P$ arrives at $v$ and is used as the key for ordering in $L_v$;  initially, $L_v$ is empty;\;
\EndFor
\For{$e = (u, v, t, \lambda)$ in the edge stream}
    \If{$t \ge t_\alpha$ and $t + \lambda \le t_\omega$}
        \If{$u = x$ and $(0, t) \not\in L_x$}
            \State Insert $(0, t)$ into $L_x$\;
        \EndIf
    \State Let $(d', a')$ be the element in $L_u$ for which $a' = \max\{a \mid (d, a) \in L_u,\ a \le t\}$
    \If{$\exists (d,a) \in L_v$ such that $a=t + \lambda$ and $d > d' + \lambda$}
            \State Replace $(d,a)$ with $(d' + \lambda,t + \lambda)$\;

        \Else
            \State Insert $(d' + \lambda,t + \lambda)$ into $L_v$\;
        \EndIf
%\algstore{splitalg2}
%\end{algorithmic}
%\end{algorithm}

%\begin{algorithm}[ht]
%\begin{algorithmic}[1]
%\algrestore{splitalg2}
    \State Remove distance-wise dominated elements in $L_v$\;
    \EndIf
\EndFor
\State \Return{$L_v$ for all $v \in V$}
\end{algorithmic}
\end{algorithm}

\subsection{The graph transformation of Wu et al.~\cite{Path_Wu}}
\label{auxiliary_section_graph_transformation}

In this section we describe in detail the graph transformation of Wu et al.~\cite{Path_Wu}, which converts a temporal graph $G = (V, E)$ into an equivalent static weighted graph $\widetilde{G} = (\widetilde{V}, \widetilde{E})$.

\begin{figure}[pos=t]
\centering
\includegraphics[scale=0.89]{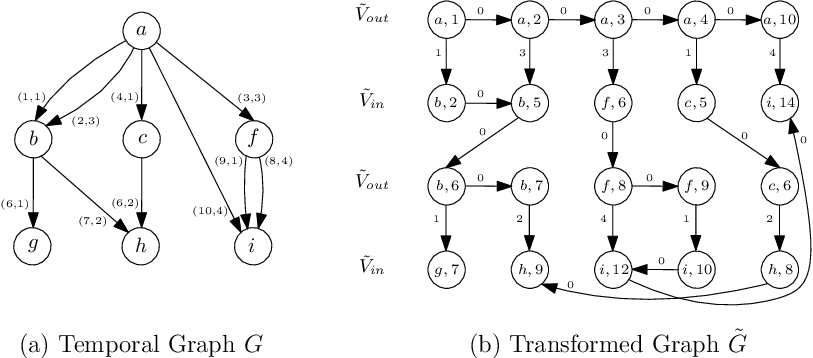}
\caption{Graph transformation, from $G$ in (a) to $\widetilde{G}$ in (b).}
\label{fig:graph_transf}
\end{figure}

For each vertex $v \in V$, the transformation creates two sets of time-stamped nodes: $\widetilde{V}_{in}(v)$, which includes a node $(v, t)$ for every unique time $t$ at which an edge arrives at $v$, and $\widetilde{V}_{out}(v)$, which includes a node $(v, t)$ for every time $t$ at which an edge departs from $v$. These sets are sorted in descending time order. Edges are then created in three categories. First, for each node in $\widetilde{V}_{\text{in}}(u)$, a zero-weight edge is created to the earliest node $(v, t_{out})$ in $\widetilde{V}_{out}(v)$ that has a time greater than or equal to it, if no edge from any other $(u, t'_{\text{in}}) \in \widetilde{V}_{in}(u) $ to $(v, t_{out})$ has been created.
%ensuring time-respecting transitions at the same node. 
Second, zero-weight edges are added between consecutive time-stamped nodes within $\widetilde{V}_{\text{in}}(v)$ and within $\widetilde{V}_{\text{out}}(v)$ to allow temporal ``waiting'' at the same vertex. Finally, for each original temporal edge $(u, v, t, \lambda) \in E$, a directed edge with weight $\lambda$ is added from $(u, t) \in \widetilde{V}_{\text{out}}(u)$ to $(v, t+\lambda) \in \widetilde{V}_{\text{in}}(v)$. Therefore, the graph $\widetilde{G}$ has a size complexity of $\bigO(M)$ in both the number of vertices and edges. Figures~\ref{fig:graph_transf}(a),(b) show a temporal graph $G$ and its transformed graph $\widetilde{G}$, respectively.

The complexity of the transformation from the temporal graph $G=(V,E)$ to the static weighted graph $\widetilde{G}$ primarily depends on two steps: vertex creation and edge creation. During vertex creation, for each vertex $v \in V$, time-stamped copies are created corresponding to distinct arrival and departure times. Sorting these time instances for all vertices takes $\bigO(M \log (d_{\max}))$ time, where $d_{\max}$ is the maximum degree (either in-degree or out-degree) of any vertex in $G$. For edge creation, zero-weight edges are constructed between consecutive time-stamped nodes and from arrival to departure nodes, which requires $\bigO(M)$ time. Additionally, each original temporal edge induces a weighted edge in $\widetilde{G}$. Placing these edges in $\widetilde{G}$ also requires $\bigO(M \log (d_{max}))$ time. Therefore, the overall time complexity of the graph transformation is $\bigO(M \log (d_{\max}))$.

%%%%%%%%%%%%%%%%%%%%%%
%%% POLY-TIME ALGO %%%
%%%%%%%%%%%%%%%%%%%%%%
\section{Algorithms for temporal beer path problems}\label{sec:algo}

\subsection{Algorithms for earliest-arrival beer paths }\label{subsec:earliest_beer}

In this subsection we propose an EABP algorithm (from source $x$ to all nodes), based on the earliest-arrival time algorithm of~\cite{Path_Wu} with a modification for computing \emph{multiple-source} earliest-arrival paths (MSEAP - Def.~\ref{def:multisource}).
We also present an alternative algorithm for graphs containing no dominated edges (a reasonable assumption, since dominated edges are irrelevant for all the objectives considered in this paper).
The latter is similar in spirit to Dijkstra's algorithm and uses binary search for parallel edges, achieving a running time independent of~$M$. Although faster than the former, it requires random access to the edges of the graph through an adjacency list (cf.~\cite{Caro_temporal_logs}), instead of scanning an edge stream in a specific order. We remark that our algorithms do not modify the graph (in contrast to, e.g., using dummy vertices), which is important because inserting new edges in temporal graphs may be a costly operation.

%Our algorithm makes two sequential passes of the graph's edge stream, calculating the earliest arrival times from $x$ to each beer vertex in the first pass. For the second pass, all arrival times are initialized to $\infty$, except those of the beer vertices, which are initialized to the minimum active time greater than or equal to their respective earliest arrival time. We will later prove that this correctly calculates EABP times.

\subsubsection{An earliest-arrival beer path algorithm in edge stream representation }

We define Multiple-Source Earliest-Arrival Paths (MSEAPs), which will prove useful as an intermediate step for computing EABPs. Intuitively, an MSEAP is a path that can reach a target node in the earliest possible time unit, starting from any node of the graph, with restrictions for the starting time of each node. If some $\text{init}[v]$ is $\infty$, we cannot start a path from $v$.

\begin{definition}[MSEAP]\label{def:multisource}
    Given a temporal graph $G=(V,E)$ and initial times $\text{init}[v]\in \mathbb{R}^+ \cup \{\infty\}$, $v\in V$, we say that a temporal path $x$--$y$ starting from $x$ no earlier than $\text{init}[x]$ is a \emph{multiple-source earliest-arrival path} to $y$ if its arrival time at $y$ is minimum among all temporal paths $v$--$y$, $v\in V$, starting from $v$ no earlier than $\text{init}[v]$. We call this arrival time the \emph{multiple-source earliest-arrival path time} of $y$.
\end{definition}

We first show that a generalization of the earliest-arrival path algorithm of~\cite{Path_Wu} can compute MSEAPs. Algorithm~\ref{alg:earliest_multisource_stream} scans each temporal edge once and thus runs in time $\bigO(n+M)$. Note that with initialization $\text{init}[x] = t_\alpha$ for some $x \in V$ and $\text{init}[v] = \infty$ for $v \in V \setminus \{x\}$, Algorithm~\ref{alg:earliest_multisource_stream} coincides with the earliest-arrival time algorithm of~\cite{Path_Wu} with source node $x$ and computes single-source earliest arrival paths.

\begin{algorithm}[ht]
\caption{$\texttt{MSEAP$(G,[t_\alpha,t_\omega],\text{init})$}$}
\label{alg:earliest_multisource_stream}
\begin{algorithmic}[1]

\Require{A temporal graph $G$ in its edge stream
 representation, a time window $[t_{\alpha}, t_{\omega}]$ and initial times $\text{init}[v]$, such that $t_\alpha \leq \text{init}[v] \leq t_\omega$ or $\text{init}[v] = \infty$, for all $v \in V$.}
\Ensure{The MSEAP times within $[t_\alpha, t_\omega]$ to every vertex, assuming we can start at time $\text{init}[v]$ from each $v \in V$.}

\State $\tau[v] \gets \text{init}[v]$ \textbf{for all} $v \in V$

\For{$e=(u,v,t,\lambda)$ in the edge stream}{
    \If{$\tau[u]\leq t$ \textnormal{\textbf{and}} $t+\lambda\leq t_{\omega}$}
       \State $\tau[v] \gets \min\{\tau[v], t+\lambda\}$
    \EndIf
}
\EndFor

\State \Return{$\tau[v]$ for each $v \in V$}
\end{algorithmic}
\end{algorithm}

%\begin{lemma}[Wu et al.~\cite{Path_Wu}]\label{lem:earliest_prefix_property}
    %For all nodes $x,y \in V$, if there exists some temporal $x$--$y$ path, then there exists some earliest-arrival $x$--$y$ path $\langle v_0=x,v_1,v_2,\ldots,v_k=y \rangle$ for which every prefix-subpath $\langle x,v_1,v_2,\ldots,v_i \rangle$ is an earliest-arrival $x-v_i$ path.
%\end{lemma}

%\begin{observation}
    %In temporal graphs, the property described in Lemma~\ref{lem:earliest_prefix_property} only holds for some $x$--$y$ earliest-arrival path, in contrast to classical graphs, where a similar property holds for all $x$--$y$ shortest paths.
%\end{observation}

%The following lemma is analogous to Lemma~\ref{lem:earliest_prefix_property} for MSEAPs.

\begin{lemma}\label{lem:earliest_multisource_prefix_property}
     For $G=(V,E)$ and $\text{init}[v]\in \mathbb{R}^+ \cup \{\infty\}$, $v\in V$, it holds that for every $y \in V$ with finite MSEAP time there exists some MSEAP $\langle v_0,v_1,\ldots,v_p=y \rangle$ to $y$ for which every prefix-subpath $\langle v_0,v_1,\ldots,v_i \rangle$ is an MSEAP to $v_i$.
\end{lemma}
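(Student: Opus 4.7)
The plan is to reduce the multiple-source setting to the single-source setting of Lemma~\ref{lem:earliest_prefix_property} via a super-source construction, and then read off the desired prefix property from that lemma. Concretely, I would build an augmented temporal graph $G' = (V \cup \{s\}, E')$ where $s$ is a fresh vertex and
\[
E' = E \cup \{(s, v, init[v], 0) : v \in V, \ init[v] \neq \infty\}.
\]
Each auxiliary edge $(s,v,init[v],0)$ starts at time $init[v] \in [t_\alpha, t_\omega]$ and has zero traversal time, so it lies within the time window by hypothesis.

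The key claim is that, for every $y \in V$, the single-source earliest-arrival time from $s$ to $y$ in $G'$ within $[t_\alpha, t_\omega]$ equals the MSEAP time to $y$ in $G$. For the forward direction, every $s$-$y$ temporal path in $G'$ must begin with an edge $(s,v,init[v],0)$ for some $v$ (since $s$ has no other outgoing edges and no incoming edges), after which it is a temporal $v$-$y$ path in $G$ whose first edge starts no earlier than $init[v]$; this is exactly a candidate path for MSEAP to $y$, with the same arrival time. The converse is immediate: any temporal $v$-$y$ path in $G$ starting no earlier than $init[v]$ can be prepended with the edge $(s,v,init[v],0)$ to produce an $s$-$y$ path in $G'$ with identical arrival time at $y$.

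Given this equivalence, I would apply Lemma~\ref{lem:earliest_prefix_property} to $G'$ with source $s$ and target $y$, obtaining an earliest-arrival $s$-$y$ path $\langle s, v_0, v_1, \ldots, v_p = y \rangle$ in $G'$ such that every prefix $\langle s, v_0, \ldots, v_i \rangle$ is itself an earliest-arrival $s$-$v_i$ path in $G'$. Stripping the initial vertex $s$ yields a temporal path $\langle v_0, v_1, \ldots, v_p = y \rangle$ in $G$ whose starting time at $v_0$ is at least $init[v_0]$, and whose arrival at $y$ matches the MSEAP time to $y$ by the claim. For each $i$, the prefix $\langle v_0, \ldots, v_i \rangle$ in $G$ has arrival time at $v_i$ equal to the earliest-arrival time from $s$ to $v_i$ in $G'$, which by the claim equals the MSEAP time to $v_i$ in $G$; hence each prefix is a MSEAP to $v_i$.

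The only subtle point, and the place I would be most careful, is verifying the bijective correspondence between MSEAP candidates in $G$ and $s$-$y$ paths in $G'$ while properly respecting the time window $[t_\alpha, t_\omega]$: I must check that the auxiliary edges do not leak paths that violate the window (they don't, because $init[v] \in [t_\alpha, t_\omega]$ whenever $init[v]$ is finite), and that the "no earlier than $init[v]$" constraint in the definition of MSEAP is faithfully encoded by the temporal constraint of concatenating the zero-length edge arriving at $v$ at time $init[v]$ with the next edge of the path. Once this correspondence is established, the prefix property transfers verbatim from Lemma~\ref{lem:earliest_prefix_property}.
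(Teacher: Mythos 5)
Your proof is correct, but it follows a genuinely different route from the paper's. The paper proves Lemma~\ref{lem:earliest_multisource_prefix_property} by a direct exchange argument: take an arbitrary MSEAP to $y$, pick the maximal prefix that fails to be a MSEAP, replace it with a MSEAP to that intermediate vertex (which arrives no later, so the suffix stays temporally feasible with unchanged arrival times), and iterate until every prefix has the property. You instead reduce the multiple-source setting to the single-source one: add a super-source $s$ with auxiliary edges $(s,v,init[v],0)$, show that earliest-arrival times from $s$ in the augmented graph coincide with MSEAP times in $G$, invoke Lemma~\ref{lem:earliest_prefix_property}, and strip $s$. The correspondence you describe is sound, including the degenerate case where the MSEAP to some $v_i$ is the trivial ``wait at $v_i$ from time $init[v_i]$'' option, which maps to the one-edge path $\langle s,v_i\rangle$; this is exactly how the paper implicitly treats the base case in the proof of Lemma~\ref{lem:earliest_multisource_correctness}. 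Your approach buys conceptual economy -- it exhibits MSEAP as ordinary single-source earliest arrival on an augmented graph, so the correctness of Algorithm~\ref{alg:earliest_multisource_stream} could likewise be obtained by reduction rather than re-proved -- at the cost of two technicalities the paper's self-contained argument avoids: you need edges of traversal time $0$ (the paper never forbids them, and its FBP construction creates such edges, but it should be stated), and you invoke Lemma~\ref{lem:earliest_prefix_property} in the time-window-restricted variant, which holds but is not literally what is quoted; note also that the window constraint you worry about does not actually appear in Definition~\ref{def:multisource} or in the statement of Lemma~\ref{lem:earliest_multisource_prefix_property}, so it can simply be dropped from your argument.
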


\begin{proof}
    Let \(P=\langle u_0,u_1,\dots ,u_p=y\rangle\) be an arbitrary MSEAP to $y$. If the desired property does not hold for \(P\), take the maximum $i \in [p]$ s.t. the prefix-subpath \(R=\langle u_0,\dots ,u_i\rangle\) of $P$ is not an MSEAP to \(u_i\). Let \(Q=\langle w_0,w_1,\dots ,w_j=u_i\rangle\) be an MSEAP to $u_i$. Replace $R$ with \(Q\), keeping the suffix $\langle u_{i+1},\dots ,u_p\rangle$ unchanged.
    Since \(Q\) reaches \(u_i\) no later than $R$ (by Def.~\ref{def:multisource}), the resulting path $P'$ is a valid temporal path and the arrival times of nodes $u_{i+1},\ldots u_p$ through $P'$ remain the same as they were in $P$ (and, thus, the desired property is preserved for those nodes).
    This process can be repeated until every prefix-subpath is an MSEAP.
\qed
\end{proof}

\begin{lemma}\label{lem:earliest_multisource_correctness}
    Algorithm~\ref{alg:earliest_multisource_stream} computes the MSEAP times to each vertex, within time window $[t_\alpha, t_\omega]$. If time $\infty$ is returned for some vertex, then that vertex is unreachable from all $v \in V$ (starting no earlier than $\text{init}[v]$) within $[t_\alpha, t_\omega]$.
\end{lemma}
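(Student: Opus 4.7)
The plan is to establish Algorithm~\ref{alg:earliest_multisource_stream}'s correctness via two invariants: \textbf{(Safety)} every value ever assigned to some $\tau[v]$ is realized by a valid temporal path respecting the starting-time constraints and the window $[t_\alpha,t_\omega]$; and \textbf{(Optimality)} at termination $\tau[v]$ is at most the true MSEAP time to $v$. The unreachability claim is then immediate: if $v$ admits no valid incoming path, Safety forbids any update, so $\tau[v]$ keeps its initial value, which is $\infty$ precisely when $init[v]=\infty$ and no $u$ with finite $init[u]$ can reach $v$ within the window.

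For Safety, I would proceed by induction on the sequence of writes to $\tau$ during the edge-stream scan. The initialization $\tau[v]=init[v]$ corresponds to the zero-edge path at $v$ starting at time $init[v]$. Each subsequent update $\tau[v]\gets t+\lambda$ is guarded by $\tau[u]\le t$ and $t+\lambda\le t_\omega$; concatenating the witness path for $u$ given by the inductive hypothesis with the edge $(u,v,t,\lambda)$ yields a witness path for the new value of $\tau[v]$ that starts no earlier than $init[\cdot]$ at its source and ends within the window.

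For Optimality, I would fix $y$ with finite MSEAP time $\tau^{\ast}$ and invoke Lemma~\ref{lem:earliest_multisource_prefix_property} to obtain an MSEAP $P=\langle v_0,v_1,\ldots,v_p=y\rangle$ every prefix of which is itself a MSEAP to its endpoint. Writing $\alpha_i$ for the arrival time of the prefix at $v_i$, I would prove by induction on $i$ that, after the $i$-th edge $(v_{i-1},v_i,t_{i-1},\lambda_{i-1})$ of $P$ has been scanned, $\tau[v_i]\le\alpha_i$. The base case $\tau[v_0]\le init[v_0]=\alpha_0$ holds at initialization. For the inductive step, the IH gives $\tau[v_{i-1}]\le\alpha_{i-1}=t_{i-2}+\lambda_{i-2}\le t_{i-1}$, so when the $i$-th edge of $P$ is reached in the $t$-sorted stream (this edge follows the $(i-1)$-th since $t_{i-2}\le t_{i-1}$) the guard $\tau[v_{i-1}]\le t_{i-1}$ is already satisfied; moreover $t_{i-1}+\lambda_{i-1}=\alpha_i\le\tau^{\ast}\le t_\omega$. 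Hence the assignment fires and $\tau[v_i]\le\alpha_i$. Setting $i=p$ gives $\tau[y]\le\tau^{\ast}$, which combined with Safety forces equality.

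The main subtle point I anticipate is tie-handling when several edges share the same starting time $t$, in particular when $\lambda_{i-1}=0$ forces $t_{i-2}=t_{i-1}$: consecutive edges of $P$ with coincident $t$-values must appear in the stream in their path order for the inductive step to go through. This can be enforced by a standard secondary ordering of the stream (e.g., by ascending $\lambda$, or by the vertex-level ordering used in~\cite{Path_Wu}); I would state this tie-breaking convention explicitly at the outset so that the inductive claim $\tau[v_{i-1}]\le t_{i-1}$ holds at the moment the $i$-th edge is processed.
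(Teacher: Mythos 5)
Your proposal is correct and follows essentially the same route as the paper's proof: a safety invariant ($\tau[v]$ always witnesses a valid temporal path, hence $\tau[v]\ge\tau^*[v]$) combined with Lemma~\ref{lem:earliest_multisource_prefix_property} and an induction along the edges of a prefix-optimal MSEAP as they appear in the ascending edge stream. Your tie-breaking caveat for zero-traversal-time edges with coincident start times is a fair observation, but it is a point the paper's proof also leaves implicit (it assumes $e_{i-1}$ is scanned before $e_i$), so it does not constitute a difference in approach.
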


\begin{proof}
    For each \(v \in V\) let \(\tau[v]\) be the time calculated by the algorithm and \(\tau^*[v] \) be the respective MSEAP time (within $[t_\alpha, t_\omega]$). Since \(\tau[v]\) is initialized to $\text{init}[v]$, any update done by the algorithm calculates the arrival time of a valid temporal path, starting from some node $v$ at time $\text{init}[v]$. Thus, at any step of the algorithm it holds that \(\tau[v] \geq \tau^*[v]\), $\forall v \in V$.

    Fix an arbitrary vertex \(y\) with \(\tau^*[y]<\infty\). By Lemma~\ref{lem:earliest_multisource_prefix_property}, there exists an MSEAP $P=\langle v_{0}=x,v_{1},\dots ,v_{p}=y\rangle$ to $y$ s.t. every prefix-subpath $\langle v_0,v_1,v_2,\ldots,v_i \rangle$ of $P$ is an MSEAP to $v_i$. 
    Denote the \(i\)-th edge of $P$ by \(e_i = (v_{i-1},v_{i},t_{i},\lambda_{i}), i=1,2,\dots,p\). Thus, it holds that \(t_{i}\geq \tau^*[v_{i-1}]\) and \(\tau^*[v_{i}]=t_{i}+\lambda_{i}\). 
    We will show by induction on \(i\) that Algorithm \ref{alg:earliest_multisource_stream} computes \( \tau[v_i] = \tau^*[v_i]\) after processing \(e_i\).
    
    \emph{Base} \((i=0)\). By the prefix-subpath property of $P$, we have \(\tau ^*[x]=\text{init}[x]\). By the initialization of the algorithm, $\tau[x] \leq \text{init}[x]$. Thus, $\tau[x]=\tau^*[x]$.

    \emph{Inductive step} \((i \geq 1)\). Assume \(\tau[v_{i-1}]=\tau^*[v_{i-1}]\) after \(e_{i-1}\) is processed.
    When Algorithm \ref{alg:earliest_multisource_stream} scans \(e_{i}=(v_{i-1},v_{i},t_{i},\lambda_{i})\), it holds that $t_i \geq \tau[v_{i-1}]=\tau^*[v_{i-1}]$. Thus, \(\tau[v_{i}]\) is updated to \(t_{i}+\lambda_{i}=\tau^*[v_{i}]\) (unless it was already equal to that value). Since it always holds that \(\tau[v] \geq \tau^*[v]\), $\forall v \in V$, $\tau[v_i]$ will never be updated again after this step.

    From the above we infer that $\tau[v]= \infty$ can be returned for some $v\in V$ only if no MSEAP to $v$ exists. This implies that $v$ is unreachable from all $u \in V$ (starting no earlier than $\text{init}[u]$) within $[t_\alpha, t_\omega]$.
\qed
\end{proof}

We now present our EABP algorithm, which calls Alg.~\ref{alg:earliest_multisource_stream} twice, computing earliest-arrival path times in the first phase and EABP times in the second phase.

\begin{algorithm}[ht]
\caption{$\texttt{EABP$(G,x,[t_\alpha,t_\omega])$}$}
\begin{algorithmic}[1]

\Require{A temporal graph $G$, source vertex $x$, time window $[t_{\alpha}, t_{\omega}]$, a set of beer vertices $B$ and a list $T_i$ of time units for each of them.}
\Ensure{The EABP time from $x$ to every vertex $v \in V$ within $[t_\alpha, t_\omega]$.}

\State $\text{init}[x]\gets t_\alpha,\, \text{init}[v] \gets \infty$ \textbf{for all} $v \in V \setminus \{x\}$

\State $\tau[v] \leftarrow \texttt{MSEAP($G,[t_\alpha,t_\omega],\text{init}$)}$, for $v \in V$

\State $\text{init}'[v] \gets \infty$ \textbf{for all} $v \in V$

\For{{\textbf{all}} $b\in B$}{
    \State $beer\_ time \gets \min\{time \in T_b \mid time \geq \tau[b]\}$ \Comment{$\infty$ if it does not exist}

    \If{$beer\_ time \leq t_{\omega}$}
        \State $\text{init}'[b] \gets beer\_ time$
    \EndIf

}
\EndFor    
\State $\tau'[v] \leftarrow \texttt{MSEAP($G,[t_\alpha,t_\omega],\text{init}'$)}$, for $v \in V$

\State \Return{$\tau'[v]$ for each $v \in V$}
\end{algorithmic}
\label{alg:earliest_arrival_beer}
\end{algorithm}

\makeatletter
\newcommand{\starlabel}[1]{\protected@write\@auxout{}%
  {\string\newlabel{#1}{{\arabic{theorem}}{\thepage}}}}
\makeatother

\begin{theorem}
\makeatletter\edef\@currentlabel{\arabic{theorem}}\makeatother\label{thrm:earliest_beer_correctness}
    Algorithm~\ref{alg:earliest_arrival_beer} computes the EABP times from $x$ to all $v \in V$ in time $\bigO(n+M+k\log T)$.
    
\end{theorem}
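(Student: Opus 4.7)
My plan is to prove the theorem in two parts: correctness, established by showing that $\tau'[v]$ equals the true EABP time, and the running time bound. For correctness, I first invoke Observation~\ref{obs:single_source_earliest} and Lemma~\ref{lem:earliest_multisource_correctness} to conclude that the first \texttt{MSEAP} call, with $init[x]=t_\alpha$ and $init[v]=\infty$ otherwise, computes $\tau[b]$ as the earliest arrival time from $x$ to $b$ within $[t_\alpha,t_\omega]$, for every vertex $b$. Assuming each $T_b$ is kept sorted, the loop over $B$ sets $init'[b]=\min\{t\in T_b : \tau[b]\le t\le t_\omega\}$ via binary search in $\bigO(\log T)$ per beer vertex.

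For the direction ``EABP time is at most $\tau'[v]$'', I argue that the output corresponds to a genuine beer path. Whenever $\tau'[v]$ is finite, the second \texttt{MSEAP} call traces a temporal path that starts at some beer vertex $b$ at time $init'[b]\in T_b$ and reaches $v$ at $\tau'[v]$; prepending any earliest-arrival $x$-$b$ path (arriving at $\tau[b]\le init'[b]$, so waiting at $b$ until $init'[b]$ is permitted) produces a valid $x$-$v$ temporal path that visits $b$ precisely when it is active, hence a valid beer path with arrival time $\tau'[v]$.

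For the converse ``$\tau'[v]$ is at most the EABP time'', I take an arbitrary valid beer path $P=\langle x=v_0,\dots,v_\ell=v\rangle$ with a beer vertex $b=v_i$ active at some $t^*\in T_b\cap[t_i+\lambda_i,t_{i+1}]$. The prefix of $P$ up to $b$ is a valid $x$-$b$ temporal path arriving at time $t_i+\lambda_i$, so $\tau[b]\le t_i+\lambda_i\le t^*$; since $t^*\in T_b$ is a candidate in the binary search, $init'[b]\le t^*$. The suffix of $P$ from $b$ has first edge starting at $t_{i+1}\ge t^*\ge init'[b]$, hence it remains traversable starting from $b$ no earlier than $init'[b]$; by Lemma~\ref{lem:earliest_multisource_correctness} applied to the second \texttt{MSEAP} call, $\tau'[v]\le end(P)$.

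For the running time, each \texttt{MSEAP} call costs $\bigO(n+M)$ by the analysis following Algorithm~\ref{alg:earliest_multisource_stream}, and the loop over $B$ contributes $\bigO(k\log T)$, yielding the claimed $\bigO(n+M+k\log T)$. The principal obstacle is the converse direction: I must verify that splicing the ``wait at $b$ until $init'[b]$ then follow $P$'s suffix'' construction produces a valid temporal path, which reduces to confirming $init'[b]\le t^*\le t_{i+1}$ so that the departing edge from $b$ remains usable; the careful choice of $init'[b]$ as the smallest active time at or after $\tau[b]$ is exactly what makes this go through.
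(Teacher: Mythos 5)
Your proposal is correct and follows essentially the same route as the paper: first \texttt{MSEAP} call for earliest arrival times to beer vertices, binary search in each $T_b$ for the earliest eligible active time, second \texttt{MSEAP} call with those initial times, plus the $\bigO(n+M+k\log T)$ accounting. The only difference is that you spell out both directions of the equivalence between the EABP time and the MSEAP time under $init'$, which the paper compresses into a one-sentence observation; this is a welcome elaboration, not a different argument.
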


\begin{proof}
    Algorithm~\ref{alg:earliest_arrival_beer} scans all edges twice. A binary search is done for each of the $k$ lists $T_b$. Hence, Algorithm~\ref{alg:earliest_arrival_beer} runs in time $\bigO(n+M+k\log T)$.

    The first call to Algorithm~\ref{alg:earliest_multisource_stream} computes the (single-source) earliest-arrival path times from $x$ to the beer vertices (within $[t_\alpha,t_\omega]$). Thus, the binary search on $T_b$ computes for each $b \in B$ the earliest departure time from $b$ s.t. $b$ is used as the beer vertex of a beer path from $x$ to any other node (within $[t_\alpha,t_\omega]$). Denote this time as $t[b]$, $b \in B$. Observe that, for any node $v$, the EABP time from $x$ to $v$ is the earliest time in which $v$ is reachable from some $b \in B$, starting from $b$ no earlier than $t[b]$. This coincides with the MSEAP time to $v$ for $\text{init}[b]=t[b]$, $b \in B$, and $\text{init}[u] = \infty$, $u \in V\setminus B$. Hence, the second call to Algorithm~\ref{alg:earliest_multisource_stream} correctly computes the EABP times from $x$ to all nodes (within $[t_\alpha,t_\omega]$), by Lemma~\ref{lem:earliest_multisource_correctness}.
\qed
\end{proof}

\subsubsection{A faster earliest-arrival beer path algorithm with adjacency list representation}

 We present an alternative algorithm to be used as the MSEAP subroutine of Algorithm~\ref{alg:earliest_arrival_beer}, in the place of Algorithm~\ref{alg:earliest_multisource_stream}. Although faster for graphs with no dominated edges, this algorithm requires an adjacency list instead of an edge stream, which may be costly to produce (see e.g.~\cite{Caro_temporal_logs}). The algorithm is similar in spirit to Dijkstra's algorithm, using a min-priority queue for the vertices and their arrival times, while utilizing binary search on parallel edges to achieve a speedup. Note that the input graph of Alg.~\ref{alg:earliest_arrival_beer} can be either in edge stream or in adjacency list representation, depending on which MSEAP subroutine is used.

\begin{algorithm}[ht]
\caption{$\texttt{MSEAP\textunderscore alternative$(G,[t_\alpha,t_\omega],\text{init})$}$}
\label{alg:earliest_multisource_dijkstra}
\begin{algorithmic}[1]
\Require{A temporal graph $G$ in adjacency list
 representation, a time window $[t_{\alpha}, t_{\omega}]$ and initial times $\text{init}[v]$, such that $t_\alpha \leq \text{init}[v] \leq t_\omega$ or $\text{init}[v] = \infty$, for all $v \in V$.}
\Ensure{The MSEAP times within $[t_\alpha, t_\omega]$ to every vertex, assuming we can start at time $\text{init}[v]$ from each $v \in V$.}

\State $\tau[v] \gets \text{init}[v]$ \textbf{for all} $v \in V$

\State $PQ \gets \emptyset$ \Comment{a min-priority queue}
\State $PQ.enqueue (\tau[v],v)$ \textbf{for all} $v \in V$
\Comment{Add $v$ to priority queue with $\tau[v]$ priority}

\While{$PQ \neq \emptyset$}
    \State $(\tau[u],u) \gets PQ.extractMin()$
    
    \For{$v \in N_{out}(u)$}
        \State Let $(u,v,t,\lambda) \in E(u,v)$ be a temporal edge with minimum $t$ s.t. $t \geq \tau[u]$ 
        
        %\Comment{binary search on $t$ of all $e \in E(u,v)$ }
        \If{$t + \lambda \leq t_{\omega}$ \textnormal{\textbf{and} $t + \lambda < \tau[v]$}}
            \State $\tau[v] \gets t + \lambda$\;
             \State   $PQ.decreaseKey(v,\tau[v])$
        \EndIf
    \EndFor
\EndWhile

\State \Return{$\tau[v]$ for each $v \in V$}
\end{algorithmic}
\end{algorithm}

\begin{theorem}
\makeatletter\edef\@currentlabel{\arabic{theorem}}\makeatother\label{thrm:earliest_beer_correctness_alternative}
    For graphs with no dominated edges, Algorithm~\ref{alg:earliest_arrival_beer} computes the EABP times from $x$ to all $v \in V$ in time $\bigO(m\log \pi+n\log n + k \log T)$, if Algorithm~\ref{alg:earliest_multisource_dijkstra} is used as its MSEAP subroutine.
\end{theorem}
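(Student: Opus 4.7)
The plan is in two parts: (i) establish correctness of Algorithm~\ref{alg:earliest_multisource_dijkstra} as a MSEAP subroutine on graphs free of dominated edges, so that Algorithm~\ref{alg:earliest_arrival_beer} inherits correctness via the argument of Theorem~\ref{thrm:earliest_beer_correctness}; and (ii) bound the running time under a Fibonacci-heap realization of the priority queue $PQ$.

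For (i), the key structural observation is that for any pair $u,v$, if no edge in $E(u,v)$ is dominated, then sorting $E(u,v)$ by starting time $t$ yields the same ordering as sorting by arrival time $t+\lambda$: indeed, if $t_1 < t_2$ and $t_1+\lambda_1 \geq t_2+\lambda_2$, then $(u,v,t_2,\lambda_2)$ would dominate $(u,v,t_1,\lambda_1)$. Consequently, the edge selected in Algorithm~\ref{alg:earliest_multisource_dijkstra} (the minimum-$t$ edge in $E(u,v)$ with $t\geq \tau[u]$, found by binary search) is also the edge achieving the earliest arrival at $v$ among edges usable from $u$ no earlier than $\tau[u]$. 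With this in hand, I would mimic the standard Dijkstra invariant and prove by induction on the extraction order that whenever a vertex $u$ is returned by $PQ.extractMin$, its key $\tau[u]$ equals the true MSEAP time $\tau^*[u]$. The inductive step takes a MSEAP $\langle v_0,\dots,v_p=u\rangle$ satisfying the prefix property of Lemma~\ref{lem:earliest_multisource_prefix_property}; since prefixes arrive no later than $\tau^*[u]$, the vertices $v_0,\dots,v_{p-1}$ are extracted before $u$ with already-correct $\tau$-values, and the edge selected when $v_{p-1}$ is processed relaxes $\tau[u]$ down to $\tau^*[u]$. Feeding this correctness into the argument of Theorem~\ref{thrm:earliest_beer_correctness}, which invokes only Lemma~\ref{lem:earliest_multisource_correctness} as a black box, yields EABP correctness.

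For (ii), using a Fibonacci heap for $PQ$, initialization and build cost $O(n)$, the $n$ calls to $extractMin$ cost $O(n\log n)$ amortized, and each of the $\sum_u |N_{out}(u)|=m$ neighbor explorations incurs an $O(\log \pi)$ binary search plus an $O(1)$ amortized $decreaseKey$. Hence a single call to Algorithm~\ref{alg:earliest_multisource_dijkstra} runs in $O(n\log n + m\log \pi)$; Algorithm~\ref{alg:earliest_arrival_beer} performs two such calls and $k$ binary searches of cost $O(\log T)$ on the activity lists $T_b$, giving the claimed $O(m\log \pi + n\log n + k\log T)$ bound.

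The main obstacle I anticipate is pinning down the link between the no-dominated-edges hypothesis and the Dijkstra-style exchange argument: the binary search locates the earliest-starting feasible edge, whereas the invariant requires it to also be the earliest-arriving feasible edge, and it is precisely the absence of dominated edges that collapses these two orderings into one. Once that equivalence is clean, the rest is a routine adaptation of Dijkstra's analysis together with the complexity accounting.
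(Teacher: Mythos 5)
Your proposal is correct and follows essentially the same route as the paper: a Dijkstra-style induction on extraction order for Algorithm~\ref{alg:earliest_multisource_dijkstra} (using the fact that, absent dominated edges, the minimum-$t$ feasible edge is also the earliest-arriving one), reuse of the Theorem~\ref{thrm:earliest_beer_correctness} argument for the outer algorithm, and the same Fibonacci-heap accounting of $\bigO(m\log (\pi)+n\log (n)+k\log (T))$. Your explicit statement that the no-dominated-edges hypothesis collapses the $t$-ordering and $(t+\lambda)$-ordering of $E(u,v)$ is the same observation the paper uses, just stated more explicitly.
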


\begin{proof}
    Algorithm~\ref{alg:earliest_multisource_dijkstra} uses a binary search\footnote{Binary search is possible due to the assumption that there are no dominated edges in the graph.} on each non-empty set $E(u,v)$ $(u,v \in V)$; this requires total time $\bigO(m \log \pi)$. If a Fibonacci heap is used as priority queue, $\bigO(\log n)$ time is required for each extract-min operation, while decrease-key operations are done in constant time. Thus, Algorithm~\ref{alg:earliest_multisource_dijkstra} runs in time $\bigO(m\log \pi+n\log n)$. Recall that Algorithm~\ref{alg:earliest_arrival_beer} calls the MSEAP subroutine twice, while also doing $\bigO(k)$ binary searches on lists of size $\bigO(T)$. We infer that Algorithm~\ref{alg:earliest_arrival_beer} runs in time $\bigO(m\log \pi+n\log n + k \log T)$, if Algorithm~\ref{alg:earliest_multisource_dijkstra} is used as the MSEAP subroutine.

    We now prove the correctness of Algorithm~\ref{alg:earliest_multisource_dijkstra} by induction. We only describe the induction briefly, since it is similar to the proof of correctness of Dijkstra's classic algorithm. For each \(v \in V\) let \(\tau[v]\) be the time calculated by the algorithm and \(\tau^*[v] \) be the respective MSEAP time within $[t_\alpha, t_\omega]$. By definition of MSEAP, for the node with minimum $\text{init}[u]$ it is $\tau^*[u]=\text{init}[u]$; thus, the first node extracted from the queue has \(\tau[u]=\tau^*[u]\). For each $v \in N_{out}(u)$, $\tau[v]$ is then updated through the $u$--$v$ edge that arrives at $v$ at the earliest possible time. Assuming the graph has no dominated edges, this is an edge with minimum $t$ s.t. $t \geq \tau[u]$. Inductively, any $u\in V$ extracted from the queue has \(\tau[u]=\tau^*[u]\). This happens because paths that only use processed nodes have been considered in previous steps and it is impossible to arrive at $u$ earlier than $\tau[u]$ by traversing an unprocessed node $v \in V$ with $\tau[v] \geq \tau[u]$. Regarding the update of $\tau[v]$ for $v \in N_{out}(u)$, the same things hold as in the base of the induction, thus completing the proof.

    The correctness of Algorithm~\ref{alg:earliest_arrival_beer} using Algorithm~\ref{alg:earliest_multisource_dijkstra} as the MSEAP subroutine is then identical to the proof of Theorem~\ref{thrm:earliest_beer_correctness}.
\qed
\end{proof}

 The $n\log n$ term in the complexity stems from the usage of a Fibonacci heap as min-priority queue. The $m\log \pi$ term stems from binary search on parallel edges, which is possible by the assumption that $G$ contains no dominated edges. Note that $M$ is bounded by $m \pi$ and may be arbitrarily larger than $m$, which renders $\bigO(m \log \pi)$ a significant improvement over $\bigO(M)$.

%For graphs with no dominated edges, Algorithm~\ref{alg:earliest_multisource_dijkstra} is a faster alternative to Algorithm~\ref{alg:earliest_multisource_stream}, albeit requiring random access to temporal edges instead of an edge stream. 

\subsection{Algorithms for latest-departure beer paths}\label{subsec:latest_beer}

We propose two LDBP algorithms (from all nodes to some node $y$), based on our EABP algorithms with modifications for \emph{multiple-target} latest departure paths (MTLDP).
The main idea is to reverse the direction of time, starting from~$y$ in~$t_\omega$ and traversing temporal edges in reverse, while storing the latest time in which we can depart from each node to reach $y$. In these algorithms, node times are initialized to $-\infty$ and the edge stream is scanned in reverse.
Due to their similarity to the algorithms of Section~\ref{subsec:earliest_beer}, we will mainly focus on their differences compared to our EABP algorithms. We also include the algorithms' pseudocode for the sake of completeness, although the proposed modifications are fairly simple.
%we defer these algorithms and their analysis to Appendix~\ref{appendix_omit_algo_LDBP}, along with an overview of their main differences compared to those of Section~\ref{subsec:earliest_beer}.
We first give the following definition, which is analogous to Def.~\ref{def:multisource} (MSEAP).

\begin{definition}[MTLDP]\label{def:multitarget}
   Given a temporal graph $G=(V,E)$ and final times $\text{fin}[v]\in \mathbb{R}^+ \cup \{-\infty\}$, $v\in V$, we say that a temporal $x$--$y$ path arriving at $y$ no later than $\text{fin}[y]$ is a \emph{multiple-target latest-departure path} from $x$ if its starting time from $x$ is maximum among all temporal paths $x-v$, $v\in V$, arriving at $v$ no later than $\text{fin}[v]$. We call this starting time the \emph{multiple-target latest-departure path time} of $x$.
\end{definition}

Intuitively, a MTLDP is a path that allows us to leave from a given vertex as late as possible, in order to reach some target within its respective given time limit. A practical application would be to buy something from a shop as late as possible, given the varying closing times of each shop.

We present two MTLDP algorithms (Algorithms~\ref{alg:latest_multisource_stream} and~\ref{alg:latest_multisource_dijkstra}), to be used as subroutines for computing LDBPs. The first one uses an edge stream (analogous to Alg.~\ref{alg:earliest_multisource_stream}) and the second one uses an adjacency list (analogous to Alg.~\ref{alg:earliest_multisource_dijkstra}). The time complexity of each algorithm is the same as in Section~\ref{subsec:earliest_beer} and the correctness proofs are analogous.
For Algorithm~\ref{alg:latest_multisource_stream}, the main difference compared to Alg.~\ref{alg:earliest_multisource_stream} is that the edge stream is scanned in reverse (i.e., by decreasing $t$-value). Each edge $(u,v,t,\lambda)$ updates $\tau [u]$ to $t$ if $\tau [u] < t$ and $t \geq t_a$ and $t+\lambda \leq \tau [v]$.

\begin{algorithm}[ht]
\caption{$\texttt{MTLDP$(G,[t_\alpha,t_\omega],\text{fin})$}$}
\label{alg:latest_multisource_stream}
\begin{algorithmic}[1]

\Require{A temporal graph $G$ in its reverse edge stream
 representation, a time window $[t_{\alpha}, t_{\omega}]$ and final times $\text{fin}[v]$, such that $t_\alpha \leq \text{fin}[v] \leq t_\omega$ or $\text{fin}[v] = -\infty$, for all $v \in V$.}
\Ensure{The MTLDP times within $[t_\alpha, t_\omega]$ from every vertex, assuming we can arrive at time $\text{fin}[v]$ at each $v \in V$.}

\State $\tau[v] \gets \text{fin}[v]$ \textbf{for all} $v \in V$

\For{$e=(u,v,t,\lambda)$ in the reverse edge stream}{
    \If{$t \geq t_{\alpha}$ \textnormal{\textbf{and}} $t+\lambda\leq \tau[v]$}
       \State $\tau[u] \gets \max\{\tau[u], t\}$
    \EndIf
    
}
\EndFor

\State \Return{$\tau[v]$ for each $v \in V$}
\end{algorithmic}
\end{algorithm}

 For Algorithm~\ref{alg:latest_multisource_dijkstra}, a max-priority queue is used for the MTLDP times and the binary search on parallel edges considers their $(t+\lambda)$-value.\footnote{Note that non-dominated edges sorted by $t$-value are also sorted by $(t+\lambda)$-value (by definition). Thus, this binary search is possible with no additional precomputation.} 
For each node $v$ extracted from the queue we consider its set $N_{in}(v)$ and the updates to the queue are done through \emph{increase-key} operations.

\begin{algorithm}[ht]
\caption{$\texttt{MTLDP\textunderscore alternative$(G,[t_\alpha,t_\omega],\text{fin})$}$}
\label{alg:latest_multisource_dijkstra}
\begin{algorithmic}[1]
\Require{A temporal graph $G$ in adjacency list
 representation, a time window $[t_{\alpha}, t_{\omega}]$ and final times $\text{fin}[v]$, such that $t_\alpha \leq \text{fin}[v] \leq t_\omega$ or $\text{fin}[v] = -\infty$, for all $v \in V$.}
\Ensure{The MTLDP times within $[t_\alpha, t_\omega]$ from every vertex, assuming we can arrive at time $\text{fin}[v]$ at each $v \in V$.}

\State $\tau[v] \gets \text{fin}[v]$ \textbf{for all} $v \in V$

\State $PQ \gets \emptyset$ \Comment{a max-priority queue}
\State $PQ.enqueue (\tau[v],v)$ \textbf{for all} $v \in V$
\Comment{Add $v$ to priority queue with $\tau[v]$ priority}

\While{$PQ \neq \emptyset$}
    \State $(\tau[v],v) \gets PQ.extractMax()$
    
    \For{$u \in N_{in}(v)$}
        \State Let $(u,v,t,\lambda) \in E(u,v)$ be a temporal edge with maximum $t$ s.t. $t + \lambda \leq \tau[v]$ 
        
        \Comment{binary search on $t+\lambda$ of $e \in E(u,v)$ }
        \If{$t \geq t_\alpha$ \textnormal{\textbf{and} $t > \tau[u]$}}
            \State $\tau[u] \gets t $\;
             \State  $PQ.increaseKey(u,\tau[u])$
        \EndIf  
    \EndFor
\EndWhile

\State \Return{$\tau[v]$ for each $v \in V$}
\end{algorithmic}
\end{algorithm}

We now present our main LDBP algorithm (analogous to Alg.~\ref{alg:earliest_arrival_beer}), which may use either Alg.~\ref{alg:latest_multisource_stream} or Alg.~\ref{alg:latest_multisource_dijkstra} as MTLDP subroutine. As discussed in Section~\ref{subsec:earliest_beer}, the latter is a faster alternative for graphs with no dominated edges, but it requires random access to the edges of the graph. 

Algorithm~\ref{alg:latest_departure_beer} computes latest-departure times in the first phase and LDBP times in the second phase, with a binary search in each list $T_b$ between the two phases. Note that the latest-departure times of all nodes except the target node are initialized to $-\infty$.
The input graph of Alg.~\ref{alg:latest_departure_beer} can be either in reverse edge stream or adjacency list representation, depending on which MTLDP subroutine is used.

\begin{algorithm}[ht]
\caption{$\texttt{LDBP$(G,y,[t_\alpha,t_\omega])$}$}
\begin{algorithmic}[1]

\Require{A temporal graph $G$, target vertex $y$, a time window $[t_{\alpha}, t_{\omega}]$, a set of beer vertices $B$ and a list $T_i$ of time units for each of them.}
\Ensure{The LDBP time to $y$ from every vertex $v \in V$ within $[t_\alpha, t_\omega]$.}

\State $\text{fin}[y]\gets t_\omega,\, \text{fin}[v] \gets -\infty$ \textbf{for all} $v \in V \setminus \{y\}$

\State $\tau[v] \leftarrow \texttt{MTLDP($G,[t_\alpha,t_\omega],\text{fin}$)}$, for $v \in V$

\State $\text{fin}'[v] \gets -\infty$ \textbf{for all} $v \in V$

\For{{\textbf{all}} $b\in B$}{
    \State $beer\_ time \gets \max\{time \in T_b \mid time \leq \tau[b]\}$
    \Comment{$-\infty$ if it does not exist}

    \If{$beer\_ time \geq t_{\alpha}$}
        \State $\text{fin}'[b] \gets beer\_ time$
    \EndIf

}
\EndFor    
\State $\tau'[v] \leftarrow \texttt{MTLDP($G,[t_\alpha,t_\omega],\text{fin}'$)}$, for $v \in V$

\State \Return{$\tau'[v]$ for each $v \in V$}
\end{algorithmic}
\label{alg:latest_departure_beer}
\end{algorithm}

\subsection{An algorithm for fastest beer paths }

A significant difference for FBP compared to EABP/LDBP is that no property similar to Lemma~\ref{lem:earliest_multisource_prefix_property} holds, meaning that prefix-subpaths of an FBP are not necessarily fastest paths. Hence, computing a fastest path from the source to a beer vertex is not useful as an intermediate step for computing FBPs. The following lemma is the key to obtaining an FBP algorithm with time complexity comparable to that of our EABP/LDBP algorithms in the previous subsections (excluding some log factors), despite the aforementioned complication.

\begin{lemma}
\label{lem:fastest}
    If there is a temporal $x$--$y$ beer path within time interval $[t_\alpha, t_\omega]$, then there is an $x$--$y$ FBP $\langle v_0=x,v_1,\ldots,v_j=b,\ldots,v_p=y\rangle$ within $[t_\alpha, t_\omega]$, traversing $b \in B$ at a time $t \in T_b$, such that its subpaths $\langle v_0=x,v_1,\ldots,v_j=b\rangle$ and $\langle v_j=b,v_{j+1},\ldots,v_p=y\rangle$ are non-dominated paths.
\end{lemma}

\begin{proof}
    Let there be a temporal beer path $P=\langle u_0=x,u_1,\ldots,u_{j'}=b,\ldots,u_{p'}=y\rangle$ within $[t_\alpha, t_\omega]$, traversing $b \in B$ at a time $t \in T_b$. Define the subpaths of $P$: $Q=\langle u_0=x,u_1,\ldots,u_{j'}=b\rangle$, $R=\langle u_{j'}=b,u_{j'+1},\ldots,u_{p'}=y\rangle$. If $Q$ is a dominated path, then there is some other path $Q'=\langle w_0=x,w_1,\ldots,w_f=b\rangle$ such that $\text{start}(Q') \geq \text{start}(Q)$ and $\text{end}(Q') \leq \text{end}(Q)$, with at least one of the inequalities being strict. Concatenate $Q'$ with $R$ to obtain a $x$--$y$ path $P'$. By the first inequality we obtain $\text{dura}(P') \leq \text{dura}(P)$. By the second inequality we infer that $P'$ also traverses $b$ at time $t$ and that $P'$ is a valid temporal path. Thus, $P'$ is a beer path with smaller or equal duration than $P$ and is also within $[t_\alpha, t_\omega]$. Applying a similar argument for $R$ completes the proof. 
\qed
\end{proof}

Based on Lemma~\ref{lem:fastest}, we propose a two-phase FBP algorithm (from source $x$ to all nodes), utilizing Algorithm~\ref{alg:non_dominated_paths} to compute non-dominated paths. In the first phase, all non-dominated paths (excluding ties) from $x$ to each $b \in B$ are computed. Similar to Section~\ref{subsec:earliest_beer}, we find the minimum eligible active time for the respective beer vertex for each of these paths. 
For the second phase, a dummy vertex is used as source, connected with each $b \in B$ by (multiple) dummy edges with starting and traversal times dictated by the values previously computed for each path. Algorithm~\ref{alg:non_dominated_paths} is then used again for the new graph, computing eligible non-dominated paths from the dummy vertex to all vertices. %The pseudocode of the algorithm we described is deferred to Appendix~\ref{appendix_omit_algo_FBP} due to space constraints.

\begin{algorithm}[ht]
\caption{$\texttt{FBP$(G,x,[t_\alpha,t_\omega])$}$}
\label{alg:fastest_beer}
\begin{algorithmic}[1]
\Require{A temporal graph $G$ in edge stream representation, source vertex $x$, time interval $[t_\alpha, t_\omega]$, a set of beer vertices $B$ and a list $T_i$ of time units for each of them.}
\Ensure{The FBP duration from $x$ to every vertex $v \in V$ within $[t_\alpha, t_\omega]$.}

%\algstore{splitalg2}
%\end{algorithmic}
%\end{algorithm}

%\begin{algorithm}[ht]
%\begin{algorithmic}[1]
%\algrestore{splitalg2}
\State $L_b \leftarrow \texttt{Non-dom\textunderscore paths($G,x,[t_\alpha,t_\omega]$)}$, for $b \in B$
\Comment{call Alg~\ref{alg:non_dominated_paths} for non-dom. paths}

\For{$b\in B$}

\For{$(s,a) \in L_b$}
    \State $a' \gets \min\{time \in T_b \mid time \geq a\}$  \Comment{$\infty$ if it does not exist}

    \If{$a' \leq t_{\omega}$}
        \State Replace $(s,a)$ with $(s,a')$ in $L_b$
        
    \Else
        \State Remove $(s,a)$ from $L_b$ \Comment{impossible to get beer}
    \EndIf
    \EndFor
\EndFor

\State Add dummy node $\epsilon$ to the graph. Add an edge $(\epsilon, b, s, a-s)$ for each $(s,a) \in L_b$ for each $b \in B$; Let $G'=(V',E')$ be the modified graph\;

\State $L_v' \leftarrow \texttt{Non-dom\textunderscore paths($G',\epsilon,[t_\alpha,t_\omega]$)}$, for $v \in V$
\Comment{Alg~\ref{alg:non_dominated_paths} in modified graph}

\For{$v \in V$}
\State $f[v] \leftarrow a-s$, where $(s,a)\in L_v'$ is the tuple with minimum difference
\EndFor
\State \Return{$f[v]$ for each $v \in V$} 
\end{algorithmic}
\end{algorithm}

We remark that the last for-loop of Alg.~\ref{alg:fastest_beer} can be incorporated into the second call of Alg.~\ref{alg:non_dominated_paths} without increasing its time complexity, by storing the tuple with minimum difference. We only present the pseudocode in this manner to help readability.

Alg.~\ref{alg:non_dominated_paths} runs in $\bigO(n + M \log c)$ time (cf.~\cite{Path_Wu}), where $c = \min\{d_{in}, d_{out}\}$ (see Section~\ref{auxiliary_section_dom_path_algo} for details). Hence, our FBP algorithm (Alg.~\ref{alg:fastest_beer}) runs in time $\bigO(n+(M+kc)\log d_{in}+kc\log T)$, since $\bigO(kc)$ temporal edges are introduced for the second phase and $d_{in}$ becomes (at most) double its initial value. The term $kc\log T$ stems from binary searches over beer times. The correctness of Alg.~\ref{alg:fastest_beer} follows directly from Lemma~\ref{lem:fastest}.

\subsection{An algorithm for shortest beer paths}

SBPs are significantly more complicated, in part because, somewhat unexpectedly, Lemma~\ref{lem:fastest} does not hold verbatim for \emph{distance-wise} non-dominated paths. To overcome this, we define the following alternative domination criterion.

\begin{definition}[\textbf{Inverse-distance-wise dominated path}]
\label{def:inverse_dom_path}
Let $P$ be a temporal $u$--$v$ path. We say that $P$ is \emph{inverse-distance-wise dominated} if there exists another temporal $u$--$v$ path $P'$ such that $\text{dist}(P') \le \text{dist}(P)$ and $\text{start}(P') \ge \text{start}(P)$, with at least one of the inequalities being strict.
\end{definition}

This domination criterion allows us to prove the following lemma, which is analogous to Lemma~\ref{lem:fastest} and serves as the backbone of our SBP algorithm.

\begin{lemma}
\label{lem:shortest}
    If there is a temporal $x$--$y$ beer path within time interval $[t_\alpha, t_\omega]$, then there is a $x$--$y$ SBP $\langle v_0=x,v_1,\ldots,v_j=b,\ldots,v_p=y\rangle$ within $[t_\alpha, t_\omega]$, traversing $b \in B$ at a time $t \in T_b$, such that $\langle v_0=x,v_1,\ldots,v_j=b\rangle$ is a distance-wise non-dominated path (within $[t_\alpha, t_\omega]$) and $\langle v_j=b,v_{j+1},\ldots,v_p=y\rangle$ is an inverse-distance-wise non-dominated path (within $[t_\alpha, t_\omega]$).
\end{lemma}

\begin{proof}
    Let there be a temporal beer path \(P=\langle u_{0}=x,u_{1},\dots ,u_{j'}=b,\dots ,u_{p'}=y\rangle\) within \([t_{\alpha},t_{\omega}]\), traversing \(b\in B\) at some time \(t\in T_{b}\).  
    Define the sub-paths of \(P\):   \(Q=\langle u_{0}=x,u_{1},\dots ,u_{j'}=b\rangle\) and  \(R=\langle u_{j'}=b,u_{j'+1},\dots ,u_{p'}=y\rangle\). If \(Q\) is distance-wise dominated (within \([t_{\alpha},t_{\omega}]\)), there exists a path  
    \(Q'=\langle w_{0}=x,w_{1},\dots ,w_{f}=b\rangle\) (within \([t_{\alpha},t_{\omega}]\)), such that  
    \(\text{dist}(Q')\le \text{dist}(Q)\) and \(\text{end}(Q')\le \text{end}(Q)\), with at least one of the two inequalities strict.  
    Concatenate \(Q'\) with \(R\) to obtain a path \(P'\) from \(x\) to \(y\).  Because of the first inequality we get \(\text{dist}(P') \le \text{dist}(P)\). 
    By the second inequality we infer that \(P'\) also traverses \(b\) at time $t$ and is a valid temporal path. Hence, \(P'\) is a beer path within \([t_{\alpha},t_{\omega}]\) whose total distance is smaller than or equal to that of \(P\). 
    
    If \(R\) is inverse-distance-wise dominated (within \([t_{\alpha},t_{\omega}]\)), there is a path \(R'=\langle z_{0}=b,z_{1},\dots ,z_{g}=y\rangle\) (within \([t_{\alpha},t_{\omega}]\)) such that  \(\text{dist}(R')\le \text{dist}(R)\) and \(\text{start}(R')\ge \text{start}(R)\), again with at least one inequality being strict. Concatenate \(Q'\) with \(R'\) to obtain a $x$--$y$ path $P''$. By the first inequality we obtain \(\text{dist}(P'') \le \text{dist}(P')\). By the second inequality, we obtain that $P''$ is a valid temporal beer path. Therefore, \(P''\) is a beer path within \([t_{\alpha},t_{\omega}]\) whose total distance is smaller than or equal to that of \(P'\).
\qed
\end{proof}

%(see the proof of Theorem~\ref{thrm:inv_dist_correctness} for an explanation). %Due to space constraints, this algorithm (Alg.~\ref{alg:inverse_distance_non_dominated_paths}) and its analysis is deferred to Appendix~\ref{appendix_omit_algo_inv}.

%%%%%%%%%%%%%%%%%%%%%%%%%%%%%

%In this section we present a subroutine for computing inverse-distance-wise non-dominated paths (see Def.~\ref{def:inverse_dom_path}, Lemma~\ref{lem:shortest}). This subroutine is used in our SBP algorithm (Alg.~\ref{alg:shortest_beer}). 

We now propose a modification of Alg.~\ref{alg:distance_non_dominated_paths} to compute inverse-distance-wise non-dominated paths from all vertices to some vertex $y$, by scanning the edge stream in reverse. 
Algorithm~\ref{alg:inverse_distance_non_dominated_paths} computes all inverse-distance-wise non-dominated paths (excl. ties) from all vertices to some given target-vertex, within a given time interval. In contrast to Alg.~\ref{alg:distance_non_dominated_paths}, it uses a \emph{reverse} edge stream (similar to our LDBP algorithm in Section~\ref{subsec:latest_beer}) and stores a tuple $(d,s)$ for each path $P$, where $d=dist(P)$ and $s=start(P)$. To understand why a reverse edge stream is necessary and why the algorithm runs for a specific target (instead of source), we refer the reader to the respective proof of correctness (Theorem~\ref{thrm:inv_dist_correctness}). %and its proof in Appendix~\ref{appendix_omit_proof}.

\begin{algorithm}[ht]
\caption{$\texttt{Inv\textunderscore dist\textunderscore non-dom\textunderscore paths($G,y,[t_\alpha,t_\omega]$)}$}
\label{alg:inverse_distance_non_dominated_paths}
\begin{algorithmic}[1]
\Require{A temporal graph $G = (V, E)$ in reverse edge stream representation, target vertex $y$, time interval $[t_\alpha, t_\omega]$.}

%\algstore{splitalg2}
%\end{algorithmic}
%\end{algorithm}

%\begin{algorithm}[ht]
%\begin{algorithmic}[1]
%\algrestore{splitalg2}
\Ensure{For each $v \in V$, the list $I_v$ of inverse-distance-wise non-dominated $(d, s)$ pairs (from $v$ to $y$) within \([t_{\alpha},t_{\omega}]\).}
\For{$v \in V$}
    \State Create a sorted list $I_v$, where an element of $I_v$ is a pair $(d, s)$ in which $d$ is the distance of a path $P$ from $v$ to $y$ and $s$ is the time in which path $P$ starts from $v$ and is used as the key for ordering in $I_v$; initially, $I_v$ is empty;\;
    \EndFor
\For{$e = (u, v, t, \lambda)$ in the reverse edge stream}
    \If{$t \ge t_\alpha$ and $t + \lambda \le t_\omega$}
        \If{$v = y$ and $(0, t+ \lambda) \not\in I_y$}
            \State Insert $(0, t+ \lambda)$ into $I_y$\;
        \EndIf
        \State Let $(d', s')$ be the element in $I_v$ such that $s' = \min\{s \mid (d, s) \in I_v,\ s \ge t+\lambda \}$\;
        \If{$\exists\, (d,s) \in I_u$ \textnormal{such that} $s = t$ and $d > d' + \lambda$}
                \State Replace $(d,s)$ with $(d' + \lambda, t)$\;
            
        \Else
                \State Insert $(d' + \lambda, t)$ into $I_u$\;
        \EndIf
        \State Remove inverse-distance-wise dominated elements in $I_u$\;
    \EndIf
\EndFor
\State \Return{$I_v$ for all $v \in V$}\;
\end{algorithmic}
\end{algorithm}

Algorithm~\ref{alg:inverse_distance_non_dominated_paths} runs in time $\bigO(n+M \log d_{out})$. Observe that $d_{out}$ is a bound for the amount of inverse-distance-wise non-dominated pairs $(d,s)$ (distance, start) between any two vertices and, thus, also a bound for the size of each list $I_v$. %For the correctness of Algorithm~\ref{alg:inverse_distance_non_dominated_paths} and more details about its running time, see the proof of Theorem~\ref{thrm:inv_dist_correctness}.
%%%%%%%%%%%%%%%%%%%%%%%%%%%%%%%%%%%%%%%%%%%%%%%%%%%%%%%%%%%%%%%%%%%%%%%%%%%%%%%%%%%%%%%%%%%%%

\begin{theorem}
\label{thrm:inv_dist_correctness}
    Algorithm~\ref{alg:inverse_distance_non_dominated_paths} computes all inverse-distance-wise non-dominated $(d,s)$ pairs from each $v \in V$ to $y$ within $[t_\alpha,t_\omega]$, in time $\bigO(n+M \log d_{out})$.
\end{theorem}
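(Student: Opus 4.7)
The plan is to prove Theorem~\ref{thrm:inv_dist_correctness} by reducing it to the already-known correctness and complexity of Algorithm~\ref{alg:distance_non_dominated_paths} via a time-reversal argument. First, I would define a reversed temporal graph $\overleftarrow{G}=(V,\overleftarrow{E})$ on the same vertex set, replacing each edge $(u,v,t,\lambda)\in E$ by $(v,u,T_0-(t+\lambda),\lambda)$ for a sufficiently large constant $T_0$ that makes all timestamps nonnegative and maps the window $[t_\alpha,t_\omega]$ to the window $[T_0-t_\omega,T_0-t_\alpha]$. A straightforward check shows that $P=\langle v_0,\dots,v_\ell\rangle$ is a valid temporal $v$-$y$ path in $G$ within $[t_\alpha,t_\omega]$ if and only if its vertex-reverse $\widetilde{P}=\langle v_\ell,\dots,v_0\rangle$, using the reversed edges, is a valid temporal $y$-$v$ path in $\overleftarrow{G}$ within the reversed window, with identical distance and with $\text{start}(\widetilde{P})=T_0-\text{end}(P)$ and $\text{end}(\widetilde{P})=T_0-\text{start}(P)$.

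Next, I would verify that under this bijection the two dominance notions match: a $(d,s)$-pair for a $v$-$y$ path in $G$ is inverse-distance-wise dominated exactly when the corresponding $(d,T_0-s)$-pair, viewed as a (distance, arrival)-pair for a $y$-$v$ path in $\overleftarrow{G}$, is distance-wise dominated, because the inequality $s'\geq s$ flips to $T_0-s'\leq T_0-s$. Having set this up, Algorithm~\ref{alg:inverse_distance_non_dominated_paths} — which scans the reverse edge stream of $G$ and maintains $(d,s)$-pairs — simulates Algorithm~\ref{alg:distance_non_dominated_paths} executed on $\overleftarrow{G}$ with source $y$, relabeling arrival times back to starting times through the involution $s\leftrightarrow T_0-a$. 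Correctness then follows directly from the correctness of Algorithm~\ref{alg:distance_non_dominated_paths} established in~\cite{Path_Wu}.

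For the complexity, I would observe that $|\overleftarrow{E}|=M$, $|V|=n$, and that $d_{in}$ in $\overleftarrow{G}$ equals $d_{out}$ in $G$. Plugging these quantities into the $\bigO(n+M\log(d_{in}))$ bound for Algorithm~\ref{alg:distance_non_dominated_paths} yields $\bigO(n+M\log(d_{out}))$, as claimed. As a sanity check consistent with the list-size bound announced right before the theorem, two distinct inverse-distance-wise non-dominated pairs in $I_v$ must differ in their start time $s$, and each such $s$ is the starting time of an outgoing edge from $v$, so $|I_v|\leq d_{out}(v)\leq d_{out}$.

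The main obstacle I expect is not the bijection itself but the bookkeeping around edge-stream order: one must check that processing the original edges in decreasing $t$-order matches the ascending starting-time order that Algorithm~\ref{alg:distance_non_dominated_paths} assumes for $\overleftarrow{G}$, and that the constraints maintained by the reverse scan (e.g.\ $t\geq t_\alpha$ and $t+\lambda$ no larger than the stored start time at $v$) correspond precisely to the $t\leq t_\omega$ and $t\geq \tau[u]$ conditions that drive the forward scan on $\overleftarrow{G}$. Once these translations are pinned down, the inductive correctness argument and the Pareto-frontier merge complexity of Algorithm~\ref{alg:distance_non_dominated_paths} carry over verbatim.
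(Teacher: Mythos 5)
Your reduction idea (time-reverse the graph via $(u,v,t,\lambda)\mapsto(v,u,T_0-(t+\lambda),\lambda)$, map inverse-distance-wise domination to distance-wise domination, and invoke the known correctness of Algorithm~\ref{alg:distance_non_dominated_paths}) is sound in spirit, and your checks of the path bijection, the dominance correspondence, the window translation, and the complexity accounting (including $|I_v|\le d_{out}$ and $d_{in}(\overleftarrow{G})=d_{out}(G)$) are all correct. However, the step you deferred as ``bookkeeping'' is exactly where the argument has a genuine gap, and the check you propose would in fact fail rather than succeed: Algorithm~\ref{alg:inverse_distance_non_dominated_paths} consumes the reverse edge stream of $G$, i.e.\ edges in \emph{decreasing $t$} order, whereas Algorithm~\ref{alg:distance_non_dominated_paths} on $\overleftarrow{G}$ expects edges sorted by \emph{increasing starting time} in $\overleftarrow{G}$, which corresponds to decreasing $t+\lambda$ in $G$. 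These two orders do not coincide in general (an edge with large $t$ and small $\lambda$ can have smaller $t+\lambda$ than an edge with smaller $t$ and larger $\lambda$). So the run you describe is not a simulation of Algorithm~\ref{alg:distance_non_dominated_paths} as stated; it is a simulation of a variant of it that processes edges in increasing \emph{arrival}-time order, and the correctness of that variant is not something you can cite from~\cite{Path_Wu} as a black box.

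The gap is fixable: the streaming invariant of Algorithm~\ref{alg:distance_non_dominated_paths} (``when $(u,v,t,\lambda)$ is processed, every relevant non-dominated pair at $u$ with arrival $\le t$ is already in $L_u$'') also holds under ascending arrival order when traversal times are positive, since the last edge of any such path arrives at time $\le t < t+\lambda$. But establishing this requires redoing the induction for the new order, which is essentially the work the paper's proof performs directly: it fixes an arbitrary inverse-distance-wise non-dominated $v$--$y$ path and inducts over its edges in reverse-stream order, showing $I_v$ ends up containing a pair that weakly dominates $(d_0,s_0)$, whence non-domination forces equality. In short, your reduction buys a clean reuse of the domination machinery and the complexity bound, but as written it relies on an order-equivalence that is false; once you patch it by re-proving the stream invariant for the actual processing order, the argument converges to the paper's direct induction.
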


\begin{proof}
    Let \(P=\langle v_0=v,v_1,\dots ,v_p=y\rangle\) be an inverse-distance-wise non-dominated path from $v$ to $y$ within $[t_\alpha,t_\omega]$. For $i\in \{0,\ldots,p-1\}$, let $s_i$ be the time in which $P$ departs from $v_i$ and $d_i$ be the distance between $v_i$ and $y$ in~$P$. We will prove that $(d_0,s_0)$ is contained in the list $I_v$ returned by Algorithm~\ref{alg:inverse_distance_non_dominated_paths}.
    Since we use a reverse edge stream, the edges of $P$ are processed in reverse order (starting from~$y$). We use induction on the edges of $P$.

    \emph{Base}. Let $e_p=(v_{p-1},y,t_p,\lambda_p)$ be the last edge of $P$. When $e_p$ is processed, the algorithm inserts $(0,t_p+\lambda_p)$ to $I_y$. Then, $(\lambda_p,t_p)$ will be inserted to $I_{v_{p-1}}$ unless that list already contains a pair that dominates $(\lambda_p,t_p)$ or one that is identical to it. In either case, $I_{v_{p-1}}$ will end up containing an element $(d,s)$ with $d\leq \lambda_p = d_{p-1}$ and $s\geq t_p = s_{p-1}$. Additionally, $I_{v_{p-1}}$ contains no dominated elements (because the algorithm removes them at the end of each loop).

    \emph{Inductive step}. Assume that $I_{v_i}$ has no dominated elements and contains an element $(d,s)$ with $d\leq d_i$ and $s \geq s_i$ (induction hypothesis). Let $e_i=(v_{i-1},v_i,t_i,\lambda_i)$ be the $i$-th edge of $P$ and note that $s_i\geq t_i +\lambda_i$ because $P$ is a valid temporal path. When $e_i$ is processed, the algorithm finds the best fitting pair in $I_{v_i}$, i.e. $(d^*,s^*)$ such that $s^*$ is the minimum possible start larger than $t_i +\lambda_i$. By the induction hypothesis, such an $s^*$ exists and $d^* \leq d_i$. Thus, after $e_i$ is processed, $I_{v_{i-1}}$ will contain an element $(d,s)$ such that $d \leq d_i + \lambda_i = d_{i-1}$ and $s \geq t_i = s_{i-1}$. Also, $I_{v_{i-1}}$ will contain no dominated elements.

    From the above induction we infer that $I_v$ contains an element $(d,s)$ with $d \leq d_0$ and $s \geq s_0$. By assumption, $P$ is an inverse-distance-wise non-dominated path within $[t_\alpha,t_\omega]$, which implies that neither of the above inequalities can be strict. Hence, $I_v$ contains $(d_0,s_0)$, which finishes the correctness proof.

    For the running time of the algorithm, observe that the amount of inverse-distance-wise non-dominated pairs from $x\in V$ to $y\in V$ is bounded by the out-degree of $x$. This implies that the size of any list $I_v$ is bounded by $d_{out}$ and, thus, search/update operations in $I_v$ take $\bigO(\log d_{out})$ time. Note that for each list $I_v$, at most one element is added for each of the outgoing edges of $v$. Thus, the total time of removing dominated elements from $I_v$ is bounded by the out-degree of $v$. For all lists $I_v,\ v\in V$, this adds up to $\bigO(M)$ time. From all the above combined with the fact that Algorithm~\ref{alg:inverse_distance_non_dominated_paths} processes each temporal edge once, we obtain that it runs in $\bigO(n+M\log d_{out})$ time.    
\qed
\end{proof}

We now propose an SBP algorithm, utilizing Algorithms~\ref{alg:distance_non_dominated_paths} and \ref{alg:inverse_distance_non_dominated_paths} to compute (inverse) distance-wise non-dominated paths. Unlike most of our algorithms, Algorithm~\ref{alg:shortest_beer} only computes an SBP between two given vertices and it requires both the regular and the reverse edge streams as input; this is necessary because it uses two different domination criteria (see Lemma~\ref{lem:shortest}), with each one imposing different restrictions via its respective algorithm (i.e., Alg.~\ref{alg:distance_non_dominated_paths} is one-to-all and Alg.~\ref{alg:inverse_distance_non_dominated_paths} is all-to-one).

\begin{algorithm}[ht]
%\scriptsize
\caption{$\texttt{SBP$(G,x,y,[t_\alpha,t_\omega])$}$}
\label{alg:shortest_beer}
\begin{algorithmic}[1]
\Require{A temporal graph $G = (V, E)$ in edge stream representation (both regular and reverse), source vertex $x$, target vertex $y$, time interval $[t_\alpha, t_\omega]$, a set of beer vertices $B$ and a list $T_i$ of time units for each of them.}
\Ensure{The SBP distance from $x$ to $y$ within $[t_\alpha, t_\omega]$.}

\State $dist \leftarrow \infty$
\State $L_b \leftarrow \texttt{Dist\_Non-dom\_paths($G,x,[t_\alpha,t_\omega]$)}$, for $b \in B$  
\Comment{Alg~\ref{alg:distance_non_dominated_paths} from $x$ to $b\in B$}
\State $I_b \leftarrow \texttt{Inv\textunderscore dist\textunderscore Non-dom\textunderscore paths($G,y,[t_\alpha,t_\omega]$)}$, for $b \in B$ 
\Comment{Alg~\ref{alg:inverse_distance_non_dominated_paths} from $b \in B$ to $y$}

\For{$b\in B$}
    \For{$(d,a) \in L_b$}
        \State $a' \gets \min\{time \in T_b \mid time \geq a\}$ \Comment{$\infty$ if it does not exist}

        \If{$a' \leq t_{\omega}$}
            \State Replace $(d,a)$ with $(d,a')$ in $L_b$
        
        \Else 
            \State Remove $(d,a)$ from $L_b$
        \Comment{impossible to get beer}
        \EndIf
\EndFor
    \For{$(d,a) \in L_b$}
        \State Find $(d',s)$ in $I_b$ with minimum $s$ such that $s \geq a$
        \Comment{best fitting path}

        \State $\text{dist} \leftarrow \min\{d+d' , \text{dist}\}$
    \EndFor
\EndFor
\State \Return{dist}
\end{algorithmic}
\end{algorithm}

Alg.~\ref{alg:shortest_beer} calls Alg.~\ref{alg:distance_non_dominated_paths} and~\ref{alg:inverse_distance_non_dominated_paths} once each. The size of each $L_v$ is bounded by $d_{in}$ and the size of each~$I_v$ by $d_{out}$. Hence, Alg.~\ref{alg:shortest_beer} runs in time $\bigO(n+M \log (d_{max}) + k d_{in} (\log T + \log d_{out}))$.
The correctness of Alg.~\ref{alg:shortest_beer} follows from Lemma~\ref{lem:shortest} and Theorem~\ref{thrm:inv_dist_correctness}.

\section{Temporal beer path queries with preprocessing}\label{sec:preprocessing}

In this section we present preprocessing techniques that help improve the asymptotic complexity of beer path queries. This is useful for applications where multiple queries need to be answered, each with different source and target vertices, different time intervals and even different active beer shops. A natural application would be to quickly recompute a beer path assuming some beer shops just closed (or some new ones opened), without computing everything from scratch.

We assume that each query contains as input a source node $x$, a target node $y$, a time interval $[t_\alpha,t_\omega]$ and a Boolean array $A=[a_1,\ldots,a_k]$, with $a_i = \texttt{true}$ iff beer vertex $b_i$ is \emph{active}.
For simplicity, we assume that each beer vertex $b \in B$ is either active or inactive for the entire duration of a query; note that our methods can be easily extended to the case where active beer times are given as input in each query, by incorporating binary searches in lists $T_b$, as in Section~\ref{sec:algo}. We also assume that the set $B$ containing the $k$ beer vertices is known during preprocessing; these vertices may be active or inactive for different queries, but no vertex $v \notin B$ may become a beer vertex.

\subsection{EABP and LDBP with precomputation of non-dominated paths}

We rely on the observation that Lemma~\ref{lem:fastest} also holds for EABPs and LDBPs, with the same proof (although not for SBPs). This allows us to run Algorithm~\ref{alg:non_dominated_paths} to precompute non-dominated paths to and from beer vertices.
To compute all non-dominated paths to each beer vertex (starting from any vertex) and all non-dominated paths from each beer vertex (to any vertex), it suffices to run Algorithm~\ref{alg:non_dominated_paths} $\bigO(n)$ times. Recall that the amount of non-dominated paths (excl. ties) between any two vertices is bounded by $c= \min \{d_{in},d_{out}\}$. We infer that this preprocessing requires $\bigO(n^2 + nM \log c)$ time and $\bigO (nkc)$ space.

With this preprocessing, any EABP query can be answered as follows. We need a binary search for each $b_i \in B$ with $a_i = \texttt{true}$ to find the non-dominated $x$--$b_i$ path that starts as early as possible (but no earlier than $t_\alpha$). Then, for each $b_i \in B$  with $a_i = \texttt{true}$ we need a binary search to find the non-dominated $b_i$--$y$ path that starts as early as possible (but no earlier than the arrival time of the respective $x$--$b_i$ path). Concatenating these paths with the previous ones yields $k$ candidate EABPs, out of which we pick the best one. This process runs in $\bigO (k \log c)$ time. Identical arguments hold for LDBP queries.

\subsection{Preprocessing through graph transformation}

\begin{comment}
\begin{figure}[t!]
\centering
\includegraphics[scale=0.90]{transf_graph.eps}
%\includegraphics{example_beer_paths_graph_transf.eps}
\caption{Graph transformation, from $G$ in (a) to $\widetilde{G}$ in (b).}
\label{fig:graph_transf}
\end{figure}
\end{comment}

In this subsection we adopt the graph transformation approach proposed by~\cite{Path_Wu} as preprocessing. This technique converts a temporal graph $G = (V, E)$ into a static weighted graph $\widetilde{G} = (\widetilde{V}, \widetilde{E})$ that explicitly encodes temporal constraints. The key idea is to model the timing of events in the temporal graph by creating time-stamped copies of each node and connecting them with appropriately weighted edges. The transformed graph $\widetilde{G}$ contains $\bigO(M)$ vertices and edges and can be produced from $G$ in $\bigO(M\log d_{max})$ time.
For each $v \in V$, the transformation creates two sets of time-stamped nodes: $\widetilde{V}_{in}(v)$, which includes a node $(v, t)$ for every unique time $t$ at which an edge arrives at~$v$, and $\widetilde{V}_{out}(v)$, which includes a node $(v, t)$ for every time $t$ at which an edge departs from~$v$. These sets are sorted in descending time order.
We refer the reader to Section~\ref{auxiliary_section_graph_transformation} and Figure~\ref{fig:graph_transf} for more details regarding this transformation.

\subsubsection{Fastest beer paths on the transformed graph}

Let $S = \{(x, t) : (x, t) \in \widetilde{V}_{out}(x),$ $t_\alpha \leq t \leq t_\omega \}$ be the set of time-stamped copies of $x$, sorted in descending order of $t$. %Let $(x, t_1)$ be the one with the largest time.
In order to compute fastest temporal paths from $x$ to every $v\in V$, it suffices to start a BFS from each $(x, t)\in S$ (in descending order of $t$), skipping nodes already explored in previous BFSs to avoid redundancy, since reaching $(v,t)$ from $(x,t_1)$ results in smaller duration compared to reaching it from $(x,t_2)$ with $t_2 < t_1$. The duration of each path is computed as the difference between the timestamp of the target and the timestamp of the source \cite{Path_Wu}. We will modify this technique to compute FBPs on $\widetilde{G}$.

%We perform a BFS from $(x, t_1)$ in $\widetilde{G}$ to compute earliest-arrival times $t[v]$ for all $v$, where the duration is $t[v] - t_1$. This process is repeated for each subsequent $(x, t_i) \in S$, skipping nodes already explored to avoid redundancy. The minimum duration from $x$ to any $v$ in $G$ is taken over all such runs. 

For each vertex $(v,t)\in \widetilde{G}$, we keep track of a $3$-valued flag, indicating whether $(v,t)$ has been visited by a beer path (value $2$) or a regular path (value $1$) or not at all (value $0$), with higher values superseding smaller ones. We run BFSs from each $(x,t) \in S$ in descending  order of $t$ (as described above), setting the flag of $(x,t)$ to $1$. We only visit a vertex if its flag is smaller than the parent's flag; in this case, the child obtains the parent's flag. If the child is a copy of some $b_i \in B$ with $a_i = \texttt{true}$, we always update its flag to $2$. When the flag of a vertex becomes $2$, we calculate the duration of the respective beer path as the difference between the timestamp of that vertex and the timestamp of the source.

The above answers FBP queries on $\widetilde{G}$ in $\bigO(M)$ time, since each $v\in \widetilde{G}$ is visited at most twice. The correctness of the algorithm follows from the fact that $\widetilde{G}$ fully encodes all possible temporal walks in $G$, combined with arguments similar to the ones for regular fastest paths.

We remark that EABP and LDBP queries can also be answered in $\widetilde{G}$ in $\bigO(M)$ time with similar techniques. However, this is not particularly useful, since the algorithms proposed in Sections \ref{subsec:earliest_beer} and \ref{subsec:latest_beer} require no preprocessing and are just as fast (or even faster) under reasonable assumptions.

%For the FBPs, we add the requirement that each path must pass through at least one vertex in the beer set $B$. During BFS from each $(x, t_i)$, we track along each path a boolean flag indicating whether a beer vertex has been encountered. For any node $(v, t_v)$, if the path from $(x, t_i)$ to $(v, t_v)$ does not include any $b \in B$, the arrival time is ignored. This ensures that the final result reflects only paths that include a visit to some beer vertex.

%Note that by restricting our search to only those paths that include at least one beer vertex, we do not lose any valid paths that would otherwise satisfy the problem constraints. This is because the time-expanded graph $\widetilde{G}$ fully encodes all possible temporal walks starting from each edge of the original node $x$, across all valid time instances. As such, any feasible path from $x$ to a node $v$ in $G$ that could potentially pass through a beer vertex is guaranteed to be represented in $\widetilde{G}$. Therefore, filtering out paths that do not visit any node in $B$ preserves correctness, as it safely eliminates solutions that are invalid with respect to the beer vertex constraint. %All other aspects of the algorithm, including the BFS order, pruning of already visited nodes, and the minimization over earliest arrival durations, remain unchanged. 

%By applying this method on $\widetilde{G}$, FBP queries can be answered in $\bigO(M)$ time. This achieves efficient query performance by utilizing the graph transformation as a preprocessing step followed by optimized BFS traversal on the transformed structure.

\subsubsection{Shortest beer paths on the transformed graph}\label{sec:SBP_transformed}

To compute shortest temporal paths from $x\in V$, it suffices to run Dijkstra's algorithm in $\widetilde{G}$ from a dummy vertex $x'$, connected via directed edges of weight 0 to every $(x, t) \in \widetilde{V}_{\text{out}}(x)$ s.t. $\ t_\alpha \leq t \leq t_\omega$, thus representing all valid departure times from $x$~\cite{Path_Wu}.

We modify Dijkstra's algorithm by storing a $3$-valued flag for each $(v,t)\in \widetilde{G}$ and updating it in the same manner described for FBP. This algorithm visits each $(v,t) \in \widetilde{G}$ at most twice, thus calculating SBP distances in $\bigO(M\log M)$ time. Although this SBP algorithm is not faster than Alg.~\ref{alg:shortest_beer} under reasonable assumptions, it computes SBPs from a source to \emph{all} other nodes. However, we can improve the running time with the following observation.

\begin{observation}
For a graph $G=(V,E)$ such that every temporal edge \(e=(u,v,t,\lambda)\in E\) has
\(\lambda>0\), the transformed graph $\widetilde{G}$ is a directed acyclic graph (DAG).
\end{observation}

Thus, when no edge with $\lambda=0$ exists in $G$, a one-to-all SBP query in the transformed graph $\widetilde{G}$ can be answered in \(\mathcal{O}(M)\) time as follows.
Order the time stamped copies of \(\widetilde G\) by
increasing timestamp, breaking ties arbitrarily. %by placing \(\widetilde V_{\mathrm{in}}\) copies before \(\widetilde V_{\mathrm{out}}\) copies.
%All waiting edges go forward in this order, and every edge corresponding to an original temporal edge goes from time \(t\) to time \(t+\lambda>t\). 
Since \(\widetilde G\) is a DAG, this order is topological. Therefore, Dijkstra's algorithm can be replaced by a standard linear-time shortest path algorithm for weighted DAGs~\cite{Cormen_book_algorithms}. %, similarly maintaining two distances \(D_0[z]\) and \(D_1[z]\) for each copy \(z\in\widetilde V\), depending on whether an active beer copy has already been visited.
%During the topological scan, each edge of \(\widetilde G\) is relaxed once for each of the two states. 
Using the same idea with the $3$-valued flag as above, the query time becomes~\(\mathcal{O}(M)\).

\section{Open questions and future work}\label{sec:conclusion}

Our work opens several directions for further research, both on the algorithmic
side and on the design of data structures for temporal beer paths.  We briefly
discuss some of them below.

     \textbf{FBP and SBP under adjacency list representation.} In this work we showed that it is possible to achieve a speedup for EABP and LDBP if an adjacency list is given as input (compared to an edge stream), assuming non-existence of dominated edges. A natural open question is whether the same is possible for FBP and SBP. This seems challenging, as to the best of our knowledge, there is no faster alternative to Algorithms~\ref{alg:non_dominated_paths} and~\ref{alg:distance_non_dominated_paths} (which we use as subroutines for FBP and SBP respectively) for computing non-dominated paths.
     
   \textbf{One-to-all SBP.}
  In this work we have shown that SBP admits a \emph{one-to-one} algorithm running in time (roughly) $\bigO(M\log d_{max})$, with an edge stream as input. %We have also presented a \emph{one-to-all} SBP algorithm that utilizes the graph transformation of~\cite{Path_Wu}, running in time $\bigO(M\log M)$, which is slower than the aforementioned one under reasonable assumptions.
  We leave as an open question whether a one-to-all SBP algorithm can be achieved with comparable running time without preprocessing.
  %Our one-to-all algorithm for SBP currently relies on the graph transformation of~\cite{Path_Wu} and inherits its preprocessing cost.
  %It would be interesting to design a direct one-to-all SBP algorithm with an edge stream as input that achieves a time complexity comparable to our one-to-one SBP algorithm (Alg.~\ref{alg:shortest_beer} without preprocessing.
  
   \textbf{Multiple types of points of interest.}
  In many applications there may be several classes of different points of interest
  (e.g., gas stations, charging points and rest areas), possibly with different
  constraints or perhaps even priorities.  Extending our framework to handle such cases  appears to be non-trivial, particularly for FBP and SBP.

   \textbf{Interval temporal graphs.}
  An appealing generalization is that of interval temporal
  graphs, where edge availabilities (or beer vertex active times) are described by continuous time intervals instead of discrete instants. Adapting our algorithms to interval representations similar to~\cite{Sahni_interval} seems possible with minor modifications (such as binary searches inside each interval).

  \textbf{Data structures for special-case sublinear beer queries.}
  In the static setting, several works propose data structures that achieve sublinear time queries for shortest beer path queries on restricted graph classes, such as outerplanar graphs (cf.~\cite{Bacic_beer_path}).  It is natural to ask for analogous temporal graph classes (e.g. sparse temporal graphs) for which such a data structure would be possible.
  %efficient temporal path algorithms are known, for example obtaining  sublinear time temporal beer queries under an outerplanar-like or other specific graph structure.

% Numbered list
% Use the style of numbering in square brackets.
% If nothing is used, default style will be taken.
%\begin{enumerate}[a)]
%\item 
%\item 
%\item 
%\end{enumerate}  

% Unnumbered list
%\begin{itemize}
%\item 
%\item 
%\item 
%\end{itemize}  

% Description list
%\begin{description}
%\item[]
%\item[] 
%\item[] 
%\end{description}  

%\clearpage %%Remove this from your manuscript

%should the table be this specifically?
% Figure
%\begin{figure}%[]
 % \centering
%    \includegraphics{}
  %  \caption{}\label{fig1}
%\end{figure}

%\begin{table}%[]
%\caption{}\label{tbl1}
%\begin{tabular*}{\tblwidth}{@{}LL@{}}
%\toprule
%  &  \\ % Table header row
%\midrule
% & \\
% & \\
% & \\
% & \\
%\bottomrule
%\end{tabular*}
%\end{table}

% Uncomment and use as the case may be
%\begin{theorem} 
%\end{theorem}

% Uncomment and use as the case may be
%\begin{lemma} 
%\end{lemma}

%% The Appendices part is started with the command \appendix;
%% appendix sections are then done as normal sections
%% \appendix

\section{Acknowledgements}
Andrea D'Ascenzo and Giuseppe Italiano have been funded by the Italian Ministry of University and Reseach under PRIN Project n. 2022TS4Y3N - EXPAND: scalable algorithms for EXPloratory Analyses of heterogeneous and dynamic Networked Data. Sotiris Kanellopoulos, Aris Pagourtzis and Christos Pergaminelis have been supported by project MIS 5154714 of the National Recovery and Resilience Plan Greece 2.0 funded by the European Union under the NextGenerationEU Program. Anna Mpanti has been funded by the European Union - Next Generation EU, Mission~4 Component 2 CUP: I83C22000990001.

% To print the credit authorship contribution details
\printcredits

%% Loading bibliography style file
%\bibliographystyle{model1-num-names}
%\bibliographystyle{cas-model2-names}
\bibliographystyle{elsarticle-num}
% Loading bibliography database
\bibliography{bibliography}

% \clearpage

% \appendix

% \section{Notation used throughout the paper}
% %\phantomsection
% \label{appendix_notation}

% \section{MTLDP and latest-departure beer path algorithms}
% %\phantomsection
% \label{appendix_omit_algo_LDBP}

% In this section we present the omitted algorithms of Section~\ref{subsec:latest_beer}. 

% \section{Open questions and future work}
% %\phantomsection
% \label{appendix_conclusion}

% \begin{itemize}
%     \item SBP one-to-all with similar time complexity (no preprocessing)
%     \item Possible speedup with adjacency list for FBP/SBP
%     \item More than one type of points of interest: SBP in particular seems difficult to extend
%     \item Data structures for special-case sublinear beer queries (like outerplanar paper for static graphs)
%     \item Interval temporal graphs like Sahni (extension seems easy)
% \end{itemize}

% Biography
%\bio{}
% Here goes the biography details.
%\endbio

%\bio{pic1}
% Here goes the biography details.
%\endbio

\end{document}